\newcommand{\deff}{\vcentcolon=}  
\newcommand{\deffin}{=\vcentcolon}
\newcommand{\N}{\mathbb{N}}   
\newcommand{\R}{\mathbb{R}}
\newcommand{\der}{\mathrm{d}}
\newcommand{\dd}{\,\mathrm{d}} 
\newcommand{\ie}{\emph{i.e.}}
\newcommand{\eg}{\emph{e.g.}}
\newcommand{\cf}{\emph{cf.}}
\newcommand{\supp}{\mathop{\mathrm{supp}}\nolimits}
\newcommand{\Dom}{\mathrm{Dom}}
\newcommand{\Hilbert}{\mathcal{H}}
\newcommand{\dist}{\mathop{\mathrm{dist}}\nolimits}
\newcommand{\esssup}{\mathop{\mathrm{ess\;\!sup}}}
\newcommand{\essinf}{\mathop{\mathrm{ess\;\!inf}}}
\newcommand{\espec}{\sigma_{\mathrm{ess}}}%
\newcommand{\eps}{\varepsilon}
\newcommand{\sii}{L^2}
\newcommand{\dvtheta}{\Theta '}
\newcommand{\Om}{\Omega}
\newcommand{\vertiii}[1]{{\left\vert\kern-0.25ex\left\vert\kern-0.25ex\left\vert #1 
    \right\vert\kern-0.25ex\right\vert\kern-0.25ex\right\vert}}
\newtheorem{theorem}{Theorem}[section]
\newtheorem{corollary}[theorem]{Corollary}
\newtheorem{lemma}[theorem]{Lemma}
\newtheorem{proposition}[theorem]{Proposition}
\theoremstyle{definition}
\newtheorem{example}{Example}
\newtheorem{Assumption}{Assumption}
\newtheorem{remark}[theorem]{Remark}
\newtheorem{open}[theorem]{Open Problem}
\numberwithin{equation}{section}
\newcommand{\Hm}[1]{\leavevmode{\marginpar{\tiny%
$\hbox to 0mm{\hspace*{-0.5mm}$\leftarrow$\hss}%
\vcenter{\vrule depth 0.1mm height 0.1mm width \the\marginparwidth}%
\hbox to
0mm{\hss$\rightarrow$\hspace*{-0.5mm}}$\\\relax\raggedright #1}}}
\begin{document}
%
\title{\textbf{\LARGE Quantum strips in higher dimensions}}
\author{David Krej\v{c}i\v{r}{\'\i}k\,$^a$
\ and \ Kate\v{r}ina Zahradov\'a\,$^b$}
\date{\small 
\begin{quote}
\emph{
\begin{itemize}
\item[$a)$] 
Department of Mathematics, Faculty of Nuclear Sciences and 
Physical Engineering, Czech Technical University in Prague, 
Trojanova 13, 12000 Prague 2, Czechia;
david.krejcirik@fjfi.cvut.cz.%
\item[$b)$] 
Department of Theoretical Physics, Nuclear Physics Institute, 
Czech Academy of Sciences, 25068 \v{R}e\v{z}, Czechia, \& School of Mathematical Sciences, Queen Mary University of London, London E1 4NS, United Kingdom
k.zahradova@qmul.ac.uk.%
\end{itemize}
}
\end{quote}
31 May 2019}
\maketitle
\begin{abstract}
\noindent
We consider the Dirichlet Laplacian 
in unbounded strips on ruled surfaces in any space dimension.
We locate the essential spectrum under the condition that the strip is asymptotically flat.
If the Gauss curvature of the strip equals zero, 
we establish the existence of discrete spectrum 
under the condition that the curve along which the strip is built is not a geodesic.
On the other hand, if it is a geodesic 
and the Gauss curvature is not identically equal to zero,
we prove the existence of Hardy-type inequalities.
We also derive an effective operator for thin strips,
which enables one to replace the spectral problem
for the Laplace-Beltrami operator on the two-dimensional surface 
by a one-dimensional Schr\"odinger operator whose potential 
is expressed in terms of curvatures.

In the appendix, we establish a purely geometric fact
about the existence of relatively parallel adapted frames for any curve
under minimal regularity hypotheses. 
%
%
\end{abstract}
%
 
\section{Introduction}
%
The interplay between the geometry of a Euclidean domain  
or a Riemannian manifold and spectral properties of
underlying differential operators constitute one of 
the most fascinating problems in mathematical sciences
over the last centuries.
A special allure is without doubts due to the emotional 
impacts the shape of objects has over a person's perception of the world,
while the spectrum typically admits direct physical interpretations.
With the advent of nanoscience, new layouts like unbounded tubes
have become highly attractive in the context of guided quantum particles
and brought unprecedented spectral-geometric phenomena.

Let us demonstrate the attractiveness of the subject
on the simplest non-trivial 
model of two-dimensional waveguides that we nicknamed
\emph{quantum strips on surfaces} in~\cite{K1}.

\begin{itemize}
\item
The spectrum of the Laplacian in
a \emph{straight} strip $\Omega_0 \deff \R \times (-a,a)$ of half-width $a>0$,
subject to uniform boundary conditions,  
was certainly known to Helmholtz if not already to Laplace.
For Dirichlet boundary conditions, 
the spectrum coincides with the semi-axis $[E_1,\infty)$,
where the spectral threshold $E_1 \deff (\frac{\pi}{2a})^2$ is positive,
indicating thus an interpretation in terms of a semiconductor.    
\item
In 1989, Exner and \v{S}eba~\cite{ES} demonstrated
that \emph{bending} the strip locally in the plane 
does not change the essential spectrum
but generates discrete eigenvalues below~$E_1$.  
In other words, realising the bent strip as a tubular neighbourhood
of radius~$a$ of an unbounded curve in the plane, 
the curvature of the curve induces a sort of \emph{attractive} interaction,
which diminishes the spectral threshold 
and leads to quantum bound states
(without classical counterparts).
We refer to~\cite{KKriz} for a survey on the bent strips.
\item
What is the effect of the curvature of the ambient space on the spectrum?
More specifically, embedding the strip in a two-dimensional Riemannian manifold
instead of~$\R^2$, 
how does the spectrum of the Dirichlet Laplacian change?
In 2003, it was demonstrated by one of the present authors~\cite{K1} 
that \emph{positive} curvature of the ambient manifold
still acts as an \emph{attractive} interaction,
even if the (geodesic) curvature of the underlying curve is zero.
\item
On the other hand, in 2006, the same author~\cite{K3}  
showed that the effect
of \emph{negative} ambient curvature is quite opposite
in the sense that it now acts as a sort of \emph{repulsive} interaction.
More specifically, if the Gauss curvature vanishes at infinity 
and the underlying curve is a geodesic, 
the spectrum is $[E_1,\infty)$ like in the straight strip, 
but the Dirichlet Laplacian additionally satisfies Hardy-type inequalities.  
\item
As a matter of fact, the presence of Hardy-type inequalities
was proved in~\cite{K3} only for strips on \emph{ruled} surfaces,
but the robustness of the result for general negatively curved surfaces
was further confirmed in~\cite{KK3}.
Now, a strip on a ruled surface can be alternatively realised 
as a \emph{twisted} (and possibly also bent) strip in~$\R^3$,
so the repulsiveness effect is analogous to 
the presence of Hardy-type inequalities in solid waveguides~\cite{EKK}
(see also \cite{K6-with-erratum}).   
\end{itemize}

The primary objective of this paper is to extend the results for strips
on ruled surfaces to higher dimensions, meaning that the twisted
and bent two-dimensional strip is embedded in~$\R^d$ with any $d \geq 3$. 
A secondary goal is to improve and unify the known results 
even in dimension $d=3$ by considering more general underlying curves.
More specifically, we consider strips built with help
of a \emph{relatively parallel adapted frame} (which always exists) 
instead of the customary Frenet frame (which does not need to exist).
Since this purely geometric construction, which we have decided
to present in the appendix, does not seem to be well known
(definitely not for higher-dimensional curves),
we believe that the material will be of independent interest 
(not only) for the quantum-waveguide community.

\begin{figure}[H]
\begin{center}
\bigskip
\includegraphics[width=0.9\textwidth]{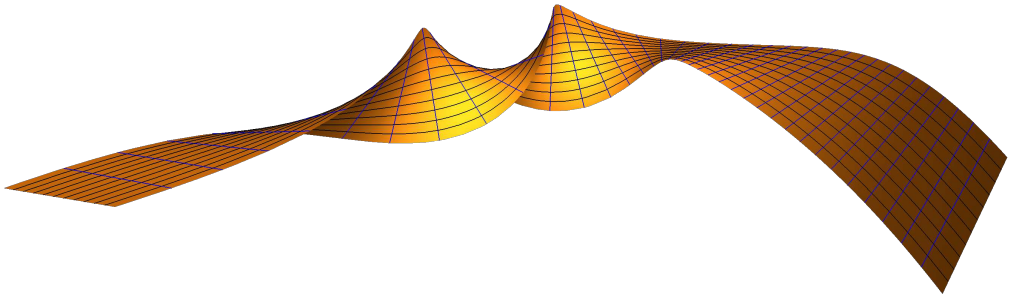}
\caption{\small A simultaneously bent and twisted strip.}
\label{Fig.both}
\end{center}
\end{figure}

The structure of the paper is as follows.
In Section~\ref{Sec.strips} we introduce the Dirichlet Laplacian
in strips on ruled surfaces in any space dimension 
under minimal regularity hypotheses.	
The essential spectrum of asymptotically flat strips
is located in Section~\ref{Sec.ess}.
The effects of bending and twisting are investigated
in Sections~\ref{Sec.bent} and~\ref{Sec.twist}, respectively.
In Section~\ref{Sec.thin} we show that,
in the limit when the width of the strip tends to zero,
the Dirichlet Laplacian converges in a norm resolvent sense
to a one-dimensional Schr\"odinger operator
whose potential contains information about 
the deformations of twisting and bending.
Appendix~\ref{Sec.app} is devoted to the construction
of a relatively parallel adapted frame 
for an arbitrary curve.

\section{Definition of quantum strips}\label{Sec.strips}
%

\subsection{The reference curve}
Given any positive integer~$n$,
let $\Gamma:\R\to\R^{n+1}$ be a curve of class $C^{1,1}$
which is (without loss of generality) parameterised by its arc-length
(\ie\ $|\Gamma'(s) |=1$ for all $s \in \R$).
By the regularity hypothesis, the tangent vector field $T\deff\Gamma'$
is differentiable almost everywhere.
Moreover, in Appendix~\ref{Sec.app}, 
we show that there exist~$n$ almost-everywhere differentiable 
normal vector fields $N_1,\dots,N_n$ such that
\begin{equation}\label{frame}
\begin{pmatrix}
    T \\
    N_1 \\
    \vdots \\
    N_n
\end{pmatrix}'
=
\begin{pmatrix}
    0 & k_1 & \dots  & k_n \\
   -k_1 &  0 & \dots  & 0 \\
    \vdots & \vdots & \ddots & \vdots \\
    -k_n  & 0 & \dots  & 0
\end{pmatrix}
\begin{pmatrix}
T \\
    N_1 \\
    \vdots \\
    N_n
\end{pmatrix}
,
\end{equation}
where $k_1,\dots,k_n:\R \to \R$ are locally bounded functions.
Introducing the $n$-tuple $k\deff(k_1,\dots,k_n)$ 
and calling it the \emph{curvature vector},
we have $k_1^2 + \dots + k_n^2 = \kappa^2$
with $\kappa \deff |\Gamma''|$ being the \emph{curvature} of~$\Gamma$.

Since the derivative~$N_j'$ is tangential for every $j\in\{1,\dots,n\}$,
the normal vectors rotate along the curve~$\Gamma$ 
only whatever amount is necessary to remain normal.
In fact, each normal vector~$N_j$ is translated 
along~$\Gamma$ as close to a parallel transport as possible
without losing normality.
For this reason, 
and in analogy with the three-dimensional setting~\cite{Bishop_1975},
each vector field~$N_j$ is called \emph{relatively parallel}
and the $(n+1)$-tuple $(T,N_1,\dots,N_n)$
is called a \emph{relatively parallel adapted frame}.
Notice that contrary to the standard Frenet frame
which requires a higher regularity~$C^{n+1}$ 
and the non-degeneracy condition $\kappa>0$,
the relatively parallel adapted frame always exists
under the minimal hypothesis~$C^{1,1}$.

\subsection{The strip as a surface in the Euclidean space}
Recall the definition $\Omega_0 \deff \R \times (-a,a)$ for a straight strip.
Isometrically embedding~$\Omega_0$ to $\R^{n+1}$,
we can think of~$\Omega_0$ as a surface in $\R^{n+1}$
obtained by parallelly translating the segment $(-a,a)$
along a straight line.
We define a general \emph{curved strip}~$\Omega$ in $\R^{n+1}$ 
as the ruled surface obtained
by translating the segment $(-a,a)$ along~$\Gamma$ 
with respect to a generic normal field 
\begin{equation}\label{generic}
  N_\Theta \deff \Theta_1 N_1 + \dots + \Theta_n N_n
  \,,
\end{equation}
where $\Theta_j : \R \to \R$ with $j\in\{1,\dots,n\}$
are such scalar functions that $\Theta_j \in C^{0,1}(\R)$ and
\begin{equation}\label{Eq sum of theta 2}
  \Theta_1^2 + \dots + \Theta_n^2 = 1
  \,.
\end{equation}
More specifically, we set 
\begin{equation}\label{strip}
  \Omega \deff \big\{
  \Gamma (s) + N_\Theta(s) \, t : \ (s,t) \in \Omega_0
  \big\}
  \,.
\end{equation}
In this way, $\Omega$~can be clearly understood 
as a deformation of the straight strip~$\Omega_0$,
see Figure~\ref{Fig.both}. 
 
We construct the $n$-tuple $\Theta \deff (\Theta_1, \dots, \Theta_n)$
and call it the \emph{twisting vector}. 
We naturally write $|\Theta'|\deff(\Theta_1'^2+\dots+\Theta_n'^2)^{1/2}$.
If the twisting vector~$\Theta$ is constant, \ie\ $\Theta'=0$, 
so that the vector field~$N_\Theta$ is relatively parallel,
we say that the strip~$\Omega$ is \emph{untwisted} or \emph{purely bent} 
(including the trivial situation $\kappa=0$
when~$\Omega$ can be identified with the straight strip~$\Omega_0$).
See Figure~\ref{Fig.bend} for a purely bent planar strip
and Figure~\ref{Fig.helix} (middle) for a purely bent non-planar strip.

\begin{figure}[H]
\begin{center}
\bigskip
\includegraphics[width=0.9\textwidth]{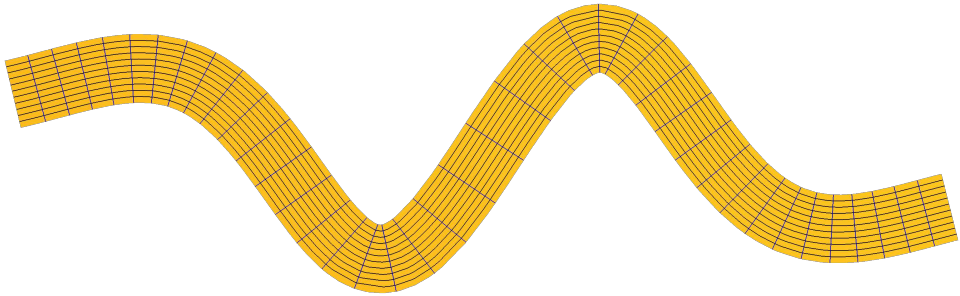}
\caption{\small A purely bent (planar) strip.}
\label{Fig.bend}
\end{center}
\end{figure}

On the other hand, if the scalar product of 
the curvature and twisting vectors vanishes,
\ie\ $k \cdot \Theta \deff k_1 \Theta_1 + \dots + k_n \Theta_n = 0$,
we say that the strip is \emph{unbent} or \emph{purely twisted} 
(including again the trivial situation $\kappa=0$ and $\Theta'=0$
when~$\Omega$ can be identified with the straight strip~$\Omega_0$).
See Figure~\ref{Fig.twist} for a purely twisted strip along a straight line 
and Figure~\ref{Fig.helix} (right) for a purely twisted strip along a space curve.

\begin{figure}[H]
\begin{center}
\bigskip
\includegraphics[width=0.9\textwidth]{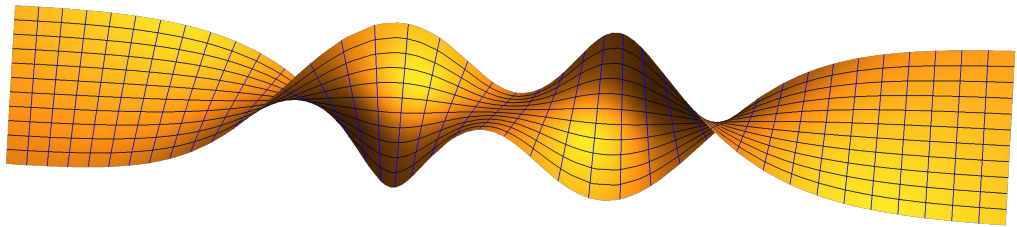}
\caption{\small A purely twisted strip.}
\label{Fig.twist}
\end{center}
\end{figure}

Notice that unbent and untwisted does not necessarily mean
that~$\Omega$ and~$\Omega_0$ are isometric
(think of a planar non-straight curve~$\Gamma$ in~$\R^3$
and choose for~$N_\Theta$ the binormal vector field),
see Figure~\ref{Fig.unboth}.

\begin{figure}[H]
\begin{center}
\bigskip
\includegraphics[width=0.9\textwidth]{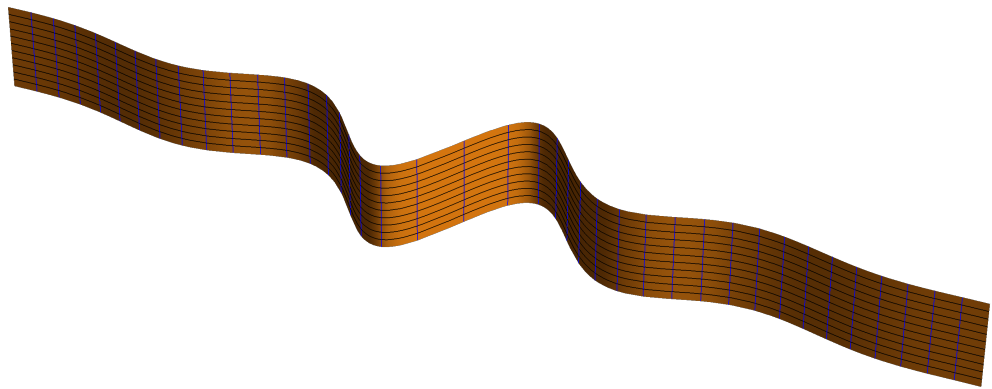}
\caption{\small An unbent untwisted strip.}
\label{Fig.unboth}
\end{center}
\end{figure}

Finally, Figure~\ref{Fig.helix} provides an example of 
a (non-planar) bent strip,
which is twisted or untwisted according to whether~$N_\Theta$
is relatively parallel or not, respectively.

\begin{remark}
Let us provide geometrical interpretations
to the crucial quantities $k\cdot\Theta$ and $\Theta'$
and supporting in this way the terminology introduced above.
Interpreting~$\Gamma$ as a curve on the surface~$\Omega$,
it is easily seen that $k\cdot\Theta$ is just 
the \emph{geodesic curvature} of~$\Gamma$.
Hence, the strip is unbent if, and only if, $\Gamma$ is a geodesic on~$\Omega$.
At the same time, 
the \emph{Gauss curvature} of the surface~$\Omega$
equals $-|\Theta'|^2/f^4$, where~$f$ is given in~\eqref{Eq metric} below. 
In accordance with a general result 
for ruled surfaces (\cf~\cite[Prop.~3.7.5]{Kli}),
we observe that this intrinsic curvature 
of the ambient manifold~$\Omega$ is always non-positive.
Moreover, the strip is untwisted if, and only if, 
the surface~$\Omega$ is flat in the sense that the Gauss curvature
is identically equal to zero.
\end{remark}
\begin{figure}[H]
\rule{-5ex}{0ex}
\begin{tabular}{ccc}%
\includegraphics[width=0.33\textwidth]{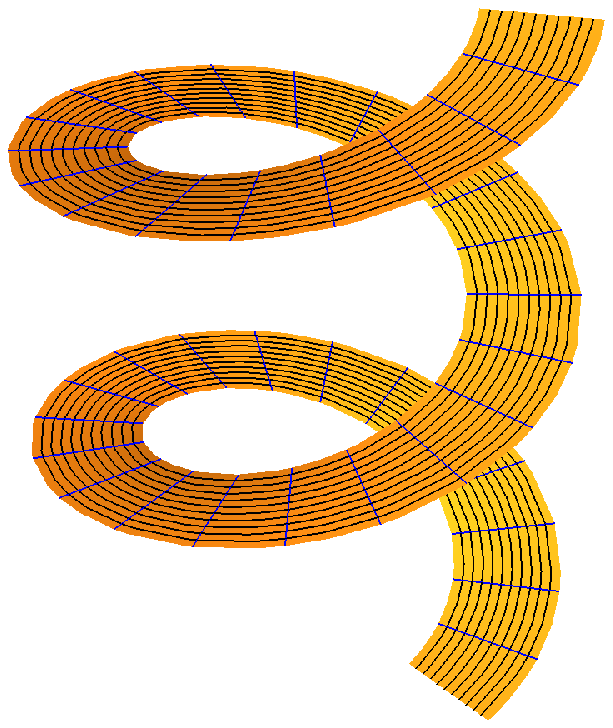}%
&\includegraphics[width=0.33\textwidth]{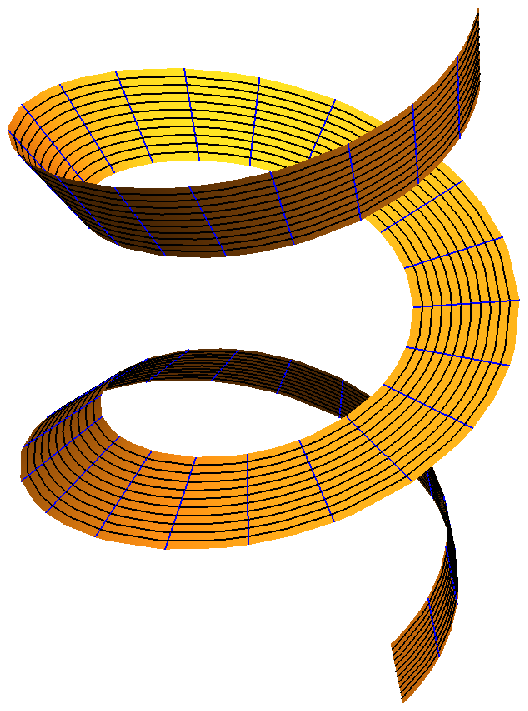}%
&\includegraphics[width=0.33\textwidth]{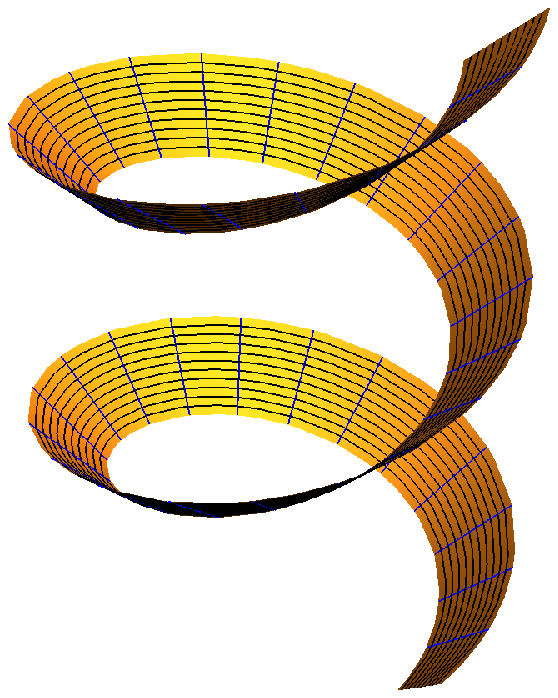}%
\\
$k\cdot\Theta\not=0 \ \land \ \Theta'\not=0$
& $k\cdot\Theta\not=0 \ \land \ \Theta'=0$
& $k\cdot\Theta=0 \ \land \ \Theta'\not=0$
\end{tabular}%
\caption{\small Strips built along a helix. 
Left: simultaneously bent and twisted version 
(the principal normal of the Frenet frame is used).
Middle: bent and untwisted version 
(a relatively parallel frame is used).
Right: unbent and twisted version
(a sum of the principal normal and binormal of the Frenet frame is used).}%
\label{Fig.helix}%
%
\end{figure}

\subsection{The strip as a Riemannian manifold}
Further conditions must be imposed on the geometry of~$\Omega$
in order to identify the curved strip with a Riemannian manifold. 
To this aim, let us introduce the mapping 
$\mathscr{L}:\R^2 \to \R^{n+1}$ defined by
(\cf~\eqref{strip})
\begin{equation}\label{Eq curved strip}
  \mathscr{L}(s,t)
  \deff \Gamma (s) + N_\Theta(s) \, t  
  \,,
\end{equation}
so that $\Omega = \mathscr{L}(\Omega_0)$.
The blue segments in the figures represent 
the geodesics $t \mapsto \mathscr{L}(s,t)$
for various choices of~$s$,
while the black lines correspond to 
the curves $s \mapsto \mathscr{L}(s,t)$
parallel to~$\Gamma$ at distance~$|t|$. 

Consider the metric 
$g \deff \nabla\mathscr{L} \cdot (\nabla\mathscr{L})^T$,
where the dot denotes the scalar product in~$\R^{n+1}$.
A simple computation using~\eqref{frame} yields
\begin{equation}\label{Eq metric}
g = \begin{pmatrix}
f^2 & 0 \\
0 & 1
\end{pmatrix}
\qquad\mbox{with}\qquad
  f(s,t) \deff \sqrt{
  \big[ 1 - t \, k(s) \cdot \Theta(s) \big]^2 
  + t^2  \, |\Theta'(s)|^2
  }
  \,.
\end{equation}
Let us now strengthen our standing hypotheses.

\begin{center}
\fbox{%
\begin{minipage}{0.98\textwidth}
\begin{Assumption}\label{Ass.Ass}
Let $\Gamma \in C^{1,1}(\R;\R^{n+1})$ and $\Theta \in C^{0,1}(\R;\R^{n})$.
Suppose~\eqref{Eq sum of theta 2}, $k\cdot\Theta \in L^\infty(\R)$ and 
\begin{equation*}
  a \, \| k\cdot\Theta \|_{L^\infty(\R)} < 1  
  \,.
\end{equation*}
\end{Assumption}
\end{minipage}%
}
\end{center}

It follows from Assumption~\ref{Ass.Ass}
that the Jacobian~$f$ never vanishes, namely
\begin{equation}\label{Ass.bound}
  f(s,t) \geq 1-a \, \| k\cdot\Theta \|_{L^\infty(\R)} > 0	
\end{equation}
for almost every $(s,t) \in \Omega_0$.
If~$\Gamma$ and~$\Theta$ were smooth functions,
the inverse function theorem would immediately imply 
that $\mathscr{L}: \Omega_0 \to \Omega$
is a local smooth diffeomorphism, 
so that~$\Omega$ could be identified with 
the Riemannian manifold $(\Omega_0,g)$,
with~$\mathscr{L}$ realising an immersion in~$\R^{n+1}$.
Under our minimal regularity assumptions, however,
we have to be rather careful. 

\begin{proposition}\label{Prop.diffeo}
Suppose Assumption~\ref{Ass.Ass}.
Then $\mathscr{L}: \Omega_0 \to \Omega$ is a local 
$C^{0,1}$-diffeomorphism.
\end{proposition}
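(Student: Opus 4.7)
The plan is to establish, around each $(s_0,t_0)\in\Omega_0$, a local bi-Lipschitz estimate
\begin{equation*}
c\,|(s_1,t_1)-(s_2,t_2)|\leq|\mathscr{L}(s_1,t_1)-\mathscr{L}(s_2,t_2)|\leq C\,|(s_1,t_1)-(s_2,t_2)|
\end{equation*}
valid for $(s_1,t_1),(s_2,t_2)$ in a sufficiently small neighbourhood $U$. From such an estimate it follows at once that $\mathscr{L}|_U\colon U\to\mathscr{L}(U)$ is a bi-Lipschitz bijection, i.e.\ the desired $C^{0,1}$-diffeomorphism. The upper bound is immediate: by~\eqref{frame} each normal $N_j$ is locally Lipschitz (its derivative $-k_j T$ being in $L^\infty_{\mathrm{loc}}$), hence $N_\Theta\in C^{0,1}$ and $\mathscr{L}(s,t)=\Gamma(s)+tN_\Theta(s)$ is locally Lipschitz.

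For the lower bound I would decompose the difference $v\deff\mathscr{L}(s_1,t_1)-\mathscr{L}(s_2,t_2)$ along the orthonormal pair $\bigl(T(s_0),N_\Theta(s_0)\bigr)$ in $\R^{n+1}$. Absolute continuity combined with the relation $N_\Theta'=\sum_j\Theta_j' N_j-(k\cdot\Theta)T$ read off from~\eqref{frame} gives
\begin{equation*}
v=\int_{s_2}^{s_1}\!\Bigl[\bigl(1-t_1\,k\cdot\Theta\bigr)(s)\,T(s)+t_1\sum_{j}\Theta_j'(s)\,N_j(s)\Bigr]\dd s+(t_1-t_2)\,N_\Theta(s_2).
\end{equation*}
Taking the scalar product with $T(s_0)$ singles out the dominant integral $\int_{s_2}^{s_1}(1-t_1\,k\cdot\Theta)(s)\langle T(s_0),T(s)\rangle\dd s$, whose integrand has uniform positive sign (bounded below by $1-a\|k\cdot\Theta\|_{L^\infty(\R)}>0$ by Assumption~\ref{Ass.Ass}) and whose factor $\langle T(s_0),T(s)\rangle\to1$ by continuity of~$T$; this yields a contribution of size $\geq c_1|s_1-s_2|$. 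The remaining pieces are of smaller order on a small neighbourhood, because $\langle T(s_0),N_j(s)\rangle\to0$ and $\langle T(s_0),N_\Theta(s_2)\rangle\to0$ as $s,s_2\to s_0$. Projecting instead onto $N_\Theta(s_0)$, the boundary term $(t_1-t_2)\langle N_\Theta(s_0),N_\Theta(s_2)\rangle\approx t_1-t_2$ is dominant, and the potentially dangerous cross term $t_1\int\sum_j\Theta_j'(s)\langle N_\Theta(s_0),N_j(s)\rangle\dd s$ is tamed by the identity $\sum_j\Theta_j\Theta_j'=0$ (obtained by differentiating~\eqref{Eq sum of theta 2}): it rewrites the integrand as $\sum_j\Theta_j'(s)\bigl(\Theta_j(s_0)-\Theta_j(s)\bigr)$, which is $O(|s-s_0|)$ uniformly in $s\in[s_2,s_1]$. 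Summing the two projection estimates delivers $|v|\geq c\,|(s_1-s_2,t_1-t_2)|$.

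The principal obstacle is the minimal regularity: since $k$ and $\Theta'$ are only in $L^\infty$, the partial derivative $\partial_s\mathscr{L}$ is not continuous and the classical inverse function theorem is unavailable. The whole argument must therefore be carried out at the integral level, using only the uniform continuity of $T$, $N_j$ and $\Theta$ together with the $L^\infty$-bounds on $k\cdot\Theta$ and $\Theta'$. The identity $\sum_j\Theta_j\Theta_j'=0$ is decisive: without it an $O(1)$ cross term would pollute the $N_\Theta(s_0)$-projection and destroy the lower bound.
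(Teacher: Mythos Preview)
Your argument is correct and follows the same overall strategy as the paper: establish a local bi-Lipschitz estimate for~$\mathscr{L}$ by writing $\mathscr{L}(s_1,t_1)-\mathscr{L}(s_2,t_2)$ via the integral formula coming from~\eqref{frame} and then controlling the pieces on a small enough interval. The execution differs in one noteworthy respect. The paper expands in the \emph{full} orthonormal frame $(T,N_1,\dots,N_n)$ at the endpoint~$s_1$ and splits the difference as $A+B$, where $A$ is linear in $(s_2-s_1,t_2-t_1)$ and $B$ is a quadratic remainder; $|A|$ is then bounded below by a Young-type inequality (using only $|\Theta|=1$) and $|B|\leq C|I|\,|s_2-s_1|$ forces $|I|$ small. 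You instead project onto the two-dimensional subspace $\mathrm{span}\{T(s_0),N_\Theta(s_0)\}$ at a fixed interior point and invoke continuity of the frame to kill cross terms; this is more geometric (it is exactly the tangent plane of the surface at $(s_0,0)$) but creates the extra cross term $t_1\int\sum_j\Theta_j'\langle N_\Theta(s_0),N_j\rangle\,\dd s$, which is why you need the differentiated constraint $\sum_j\Theta_j\Theta_j'=0$. Either route works under Assumption~\ref{Ass.Ass}; the paper's buys a slightly cleaner bookkeeping (no reference point~$s_0$, no appeal to $\Theta\cdot\Theta'=0$), while yours makes the role of the surface's tangent plane more transparent.
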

\begin{proof}
Given any bounded interval $I \subset \R$,
let $s_1,s_2 \in I$ and $t_1,t_2 \in (-a,a)$.
Let us look at the difference
$$
\begin{aligned}
  \mathscr{L}(s_2,t_2) - \mathscr{L}(s_1,t_1)
  &= \Gamma(s_2) - \Gamma(s_1) + N_\Theta(s_2)\, t_2 - N_\Theta(s_1)\, t_1
  \\
  &= \int_{s_1}^{s_2} T(\xi) \, \der\xi 
  + (t_2-t_1) \, N_\Theta(s_1) 
  + t_2 \int_{s_1}^{s_2} N_\Theta'(\xi) \, \der\xi
  \\
  &= \int_{s_1}^{s_2} 
  \Big(
  \big[1-t_2 \, (k\cdot\Theta)(\xi)\big] \, T(\xi) 
  + \big[t_2 \, (\Theta'\cdot N)(\xi)\big]
  \Big)
  \, \der\xi 
  + (t_2-t_1) \, N_\Theta(s_1) 
  \,,
\end{aligned}
$$
where we have used~\eqref{generic} and~\eqref{frame} 
and abbreviated $\Theta' \cdot N \deff \Theta_1' N_1 + \dots + \Theta_n' N_n$.
From the identity on the last line, 
we immediately conclude that
$$
  |\mathscr{L}(s_2,t_2) - \mathscr{L}(s_1,t_1)|
  \leq \big[(1+a \, \|k\cdot\Theta\|_\infty) + a \, \|\Theta'\|_\infty\big] \, |s_2-s_1|
  + |t_1-t_1|
  \,,
$$ 
where $\|\cdot\|_\infty$ denotes the supremum norm of $L^\infty(I)$
and $\|\Theta'\|_\infty \deff \||\Theta'|\|_\infty$.
Since the choice of the interval~$I$ has been arbitrary, 
we conclude that~$\mathscr{L}$ is a locally Lipschitz function.

To show that~$\mathscr{L}$ is a locally bi-Lipschitz function,
\ie~also the inverse~$\mathscr{L}^{-1}$ is a Lipschitz function
on $I \times (-a,a)$,
we further improve the expansions above to
$$
\begin{aligned}
  \Gamma(s_2) - \Gamma(s_1)
  &= T(s_1) \, (s_2-s_1)
  + \int_{s_1}^{s_2} \int_{s_1}^\xi (k\cdot N)(\eta) \, \der\eta \, \der\xi
  \,,
  \\
  \int_{s_1}^{s_2} N_\Theta'(\xi) \, \der\xi
  &= N(s_1) \cdot \int_{s_1}^{s_2} \Theta'(\xi) \, \der\xi
  - T(s_1) \int_{s_1}^{s_2} (k\cdot\Theta)(\xi) \, \der\xi
  \\
  &\quad 
  + \int_{s_1}^{s_2} \int_{s_1}^\xi \Theta'(\xi) \cdot N'(\eta)
  \, \der\eta \, \der\xi
  - \int_{s_1}^{s_2} \int_{s_1}^\xi (k\cdot\Theta)(\xi) \, T'(\eta)
  \, \der\eta \, \der\xi
  \,.
\end{aligned}
$$
That is, we can write
$
  \mathscr{L}(s_2,t_2) - \mathscr{L}(s_1,t_1)
  = A+B
$
with
$$
\begin{aligned}
  A &\deff T(s_1) \left[
  (s_2-s_1) - t_2\int_{s_1}^{s_2} (k\cdot\Theta)(\xi) \, \der\xi
  \right]
  + N(s_1) \cdot \left[
  (t_2-t_1) \, \Theta(s_1) 
  + t_2 \int_{s_1}^{s_2} \Theta'(\xi) \, \der\xi
  \right]
  ,
  \\
  B &\deff  
  \int_{s_1}^{s_2} \int_{s_1}^\xi (k\cdot N)(\eta) \, \der\eta \, \der\xi
  + t_2 \int_{s_1}^{s_2} \int_{s_1}^\xi \Theta'(\xi) \cdot N'(\eta)
  \, \der\eta \, \der\xi
  - t_2 \int_{s_1}^{s_2} \int_{s_1}^\xi (k\cdot\Theta)(\xi) \, T'(\eta)
  \, \der\eta \, \der\xi
  \,.
\end{aligned}
$$
For every $\delta_1,\delta_2 \in (0,1)$, we have
$$
\begin{aligned}
  |A|^2 
  &= \left[
  (s_2-s_1) - t_2\int_{s_1}^{s_2} (k\cdot\Theta)(\xi) \, \der\xi
  \right]^2
  + \left[
  (t_2-t_1) \, \Theta(s_1) 
  + t_2 \int_{s_1}^{s_2} \Theta'(\xi) \, \der\xi
  \right]^2
  \\
  &\geq  
  \delta_1 \, (s_2-s_1)^2
  - \frac{\delta_1}{1-\delta_1} \, a^2 \, \|k\cdot\Theta\|_\infty^2 \, (s_2-s_1)^2 
  + \delta_2 \, (t_2-t_1)^2 
  - \frac{\delta_2}{1-\delta_2} \, a^2 \, \|\Theta'\|_\infty^2 \, (s_2-s_1)^2 
  \\
  &=
  \delta_1 \, (s_2-s_1)^2 \, 
  \left[
  1 - \frac{1}{1-\delta_1} \, a^2 \, \|k\cdot\Theta\|_\infty^2
  - \frac{\delta_2/\delta_1}{1-\delta_2} \, a^2 \, \|\Theta'\|_\infty^2
  \right] 
  + \delta_2 \, (t_2-t_1)^2 
  \,.
\end{aligned}  
$$
By Assumption~\ref{Ass.Ass}, we can choose~$\delta_1$ 
so small that $1 - \frac{1}{1-\delta_1} \, a^2 \, \|k\cdot\Theta\|_\infty^2$
is positive. Then we choose~$\delta_2$ so small that
the square bracket in the last line above is positive.  
Altogether we can assure that there is a positive constant~$c$,
depending exclusively on the choice of the interval~$I$, 
such that 
$$
  |A| \geq c \sqrt{(s_2-s_1)^2 + (t_2-t_1)^2}
  \,.
$$
At the same time, using~\eqref{frame}, we have
$$
  |B| \leq (s_2-s_1)^2 
  \left(
  \|\kappa\|_\infty
  + a \, \|\Theta'\|_\infty \|\kappa\|_\infty
  + a \, \|\kappa\|_\infty^2
  \right)
  .
$$
Notice that $\|\kappa\|_\infty$ is finite by Assumption~\ref{Ass.Ass},
although~$\kappa$ is not supposed to be necessarily bounded on~$\R$.
That is, there exists a positive constant~$C$,
depending exclusively on the choice of the interval~$I$, 
such that $|B| \leq C |I| |s_2-s_1|$,
where~$|I|$ denotes the length of the interval~$I$.
Consequently,
$$
  |\mathscr{L}(s_2,t_2) - \mathscr{L}(s_1,t_1)|
  \geq (c-C |I|) \sqrt{(s_2-s_1)^2 + (t_2-t_1)^2}
$$
for all $s_1,s_2 \in I$ and $t_1,t_2 \in (-a,a)$.
Choosing~$|I|$ sufficiently small, we see that~$\mathscr{L}$
is invertible on $I \times (-a,a)$ and that the inverse~$\mathscr{L}^{-1}$
is a locally Lipschitz function.
\end{proof}

As a consequence of this proposition, 
the restriction~$\mathscr{L}\upharpoonright\Omega_0$
is a $C^{0,1}$-immersion.
Assuming additionally that $\mathscr{L} \upharpoonright \Omega_0$
is injective, then it is actually an embedding
and~$\Omega$ has a geometrical meaning of 
a non-self-intersecting strip.
For our purposes, however,
it is enough to assume that~$\Omega$ is an immersed submanifold.
Even less, disregarding the ambient space~$\R^{n+1}$ completely,
instead of~$\Omega$ we shall consider $(\Omega_0,g)$
as an abstract Riemannian manifold.
From now on, we thus assume the minimal hypotheses 
of Assumption~\ref{Ass.Ass} and nothing more.

\begin{remark}\label{Rem.spiral}
It is worth noticing that, contrary to the geodesic curvature~$k\cdot\Theta$,
the curvature~$\kappa$ is not assumed to be (globally) bounded
by Assumption~\ref{Ass.Ass}. In particular, $\Gamma$~is allowed to be 
a spiral with $\kappa(s) \to \infty$ as $s \to \pm\infty$.
\end{remark}

\subsection{The strip as a quantum Hamiltonian} 	
The word ``quantum'' refers to that we consider
the Hamiltonian of a free quantum particle constrained to~$\Omega$.
As usual, we model the Hamiltonian by the Laplace-Beltrami operator
in $\sii(\Omega)$, subject to Dirichlet boundary condition.
Since we think of~$\Omega$ as part of an abstract manifold
(not necessarily embedded in~$\R^{n+1}$),
we disregard the presence of extrinsic potentials 
occasionally added to the Laplace-Beltrami operator 
in order to justify quantisation on submanifolds (\cf~\cite{FrHe}).  

Using the identification $\Omega \cong (\Omega_0,g)$
with the metric~$g$ given by~\eqref{Eq metric}, 
the operator of our interest is thus
the self-adjoint operator~$H$ in the Hilbert space 
\begin{equation}\label{Hilbert}
  \Hilbert \deff \sii(\Omega_0,|g(s,t)|^{1/2}\,\der s \, \der t)
  = \sii(\Omega_0,f(s,t)\,\der s \, \der t)
\end{equation}
that acts as 
\begin{equation}\label{Eq LapBel strip}
  -\Delta_g \deff
  -|g|^{1/2} \partial_\mu |g|^{1/2} g^{\mu\nu} \partial_\nu 
  = - f^{-1} \, \partial_1 \, f^{-1}  \, \partial_1 
  - f^{-1} \, \partial_2 \, f  \, \partial_2 
\end{equation}
in~$\Omega_0$ 
and the functions in the operator domain vanish on~$\partial\Omega_0$.
Here we employ the standard notations 
$|g|\deff\det(g)$ and $(g^{\mu\nu})=g^{-1}$ 
together with the Einstein summation convention
with the range of indices being $\mu,\nu \in \{1,2\}$.
As usual we introduce~$H$ as the Friedrichs extension
of the operator $-\Delta_g$ initially defined on $C_0^\infty(\Omega_0)$. 
More specifically, $H$~is defined as the self-adjoint operator associated 
in~$\Hilbert$
(in the sense of the representation theorem~\cite[Thm.~VI.2.1]{Kato}) 
with the quadratic form
\begin{equation}\label{form}
\begin{aligned} 
  h[\psi] &\deff \|f^{-1/2}\partial_1\psi\|_\Hilbert^2 
  + \|f^{1/2}\partial_2\psi\|_\Hilbert^2
  = \int_{\Omega_0} \frac{|\partial_1\psi(s,t)|^2}{f(s,t)} \, \der s \, \der t
  + \int_{\Omega_0} |\partial_2\psi(s,t)|^2 \, f(s,t) \, \der s \, \der t
  \,, 
  \\
  \Dom(h) &\deff \overline{C_0^\infty(\Omega_0)}^{\|\cdot\|_{\Hilbert_1}}
  \,, \qquad \mbox{where} \qquad
  \|\psi\|_{\Hilbert_1} \deff \sqrt{h[\psi]+\|\psi\|_\Hilbert^2}
  \,.
\end{aligned} 
\end{equation}
By~$\Hilbert_1$ we shall understand the Hilbert space~$\Dom(h)$
equipped with the norm~$\|\cdot\|_{\Hilbert_1}$.

Under our standing hypotheses of Assumption~\ref{Ass.Ass},
the crucial bound~\eqref{Ass.bound} holds 
and, moreover, the function~$f$ is locally bounded.
Consequently, one has 
\begin{equation}\label{compact}
  \{\psi \in W_0^{1,2}(\Omega_0) : \ 
  \supp\psi \subset [-R,R]\times[-a,a] \mbox{ for some } R > 0
  \}
  \subset \Dom(h)
  \,.
\end{equation}
Assuming in addition that
\begin{equation}\label{Ass2}
  |\Theta'| \in L^\infty(\R) \,,   
\end{equation}
then there exists a positive constant~$C$ such that
even the global bounds 
\begin{equation}\label{global}
  C^{-1} \leq f(s,t) \leq C
\end{equation}
hold for almost every $(s,t) \in \Omega_0$.
Consequently, $\|\cdot\|_{\Hilbert_1}$ is equivalent to the usual norm
of the Sobolev space $W^{1,2}(\Omega_0)$ 
and one has $\Dom(h) = W_0^{1,2}(\Omega_0)$.
In this paper, however, we proceed in a greater generality
without assuming the extra hypothesis~\eqref{Ass2}.

\section{Asymptotically flat strips}\label{Sec.ess}
%
If the strip~$\Omega$ is \emph{flat} in the sense that 
its metric~\eqref{Eq metric} is Euclidean, 
\ie~$f=1$ (identically),
then~$H$ coincides with the Dirichlet Laplacian in~$\Omega_0$,
which we denote here by~$H_0$.
More specifically, $H_0$~is the operator in $\Hilbert_0\deff\sii(\Omega_0)$
associated with the quadratic form
$h_0[\psi] \deff \int_{\Omega_0} |\nabla\psi(x)|^2\, \der x$,
$\Dom(h_0)\deff W_0^{1,2}(\Omega_0)$.
It is well known that  
$$
  \sigma(H_0) = [E_1,\infty) 
  \qquad \mbox{with} \qquad
  E_1 \deff \left(\frac{\pi}{2a}\right)^2
$$ 
and that the (purely essential) spectrum 
is in fact purely absolutely continuous.

In this section, we consider quantum strips which 
are \emph{asymptotically flat} in the sense that 
their metric~\eqref{Eq metric} converges 
to the flat metric at the infinity of~$\Omega_0$. 
More specifically, we impose the conditions
\begin{equation}\label{vanish}
  \lim_{|s|\to\infty} (k \cdot \Theta)(s) = 0
  \qquad \mbox{and} \qquad
  \lim_{|s|\to\infty} |\Theta'(s)| = 0 
  \,.
\end{equation}
Since quantum propagating states are expected to be determined 
by the behaviour of the metric at infinity, 
the following result is very intuitive.

\begin{theorem}\label{Thm.ess}
Suppose Assumption~\ref{Ass.Ass}.
If~\eqref{vanish} holds, then   
\begin{equation*}
  \espec (H) = [E_1,\infty)
  \,.
\end{equation*}
\end{theorem}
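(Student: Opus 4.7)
The goal is to prove the two inclusions $[E_1,\infty)\subset\espec(H)$ and $\espec(H)\subset[E_1,\infty)$ separately. Both arguments rest on the simple observation that, under~\eqref{vanish}, the explicit formula~\eqref{Eq metric} together with the bound $|t|\le a$ implies that the Jacobian $f$ converges uniformly to $1$ on the sets $\{(s,t):|s|>R\}$ as $R\to\infty$.

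For the inclusion $[E_1,\infty)\subset\espec(H)$ I would use Weyl's criterion with singular sequences transported to spatial infinity. Fix $\lambda\geq E_1$ and write $\lambda=E_1+k^2$ with $k\geq 0$. Let $\chi_1(t)\deff a^{-1/2}\cos(\pi t/(2a))$ be the transverse Dirichlet ground state, fix $\phi\in C_0^\infty(\R)$ with $\|\phi\|_{L^2(\R)}=1$, choose $s_n\to+\infty$, and define
$$
\psi_n(s,t)\deff n^{-1/2}\,\phi\!\big((s-s_n)/n\big)\,\chi_1(t)\,e^{iks}.
$$
Each $\psi_n$ has compact $s$-support and therefore belongs to $\Dom(h)$ by~\eqref{compact}, and $\psi_n\rightharpoonup 0$ weakly in $\Hilbert$ because the supports move to infinity. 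Using the uniform convergence $f\to 1$ on $\supp\psi_n$, the identity $\int|\chi_1'|^2=E_1$, and the scaling estimate $\int|\partial_1\psi_n|^2 = O(n^{-2}) + k^2 + o(1)$, a direct computation gives $\|\psi_n\|_\Hilbert\to 1$, $h[\psi_n]-\lambda\|\psi_n\|_\Hilbert^2\to 0$, and more generally $h(\psi_n,\varphi)-\lambda\langle\psi_n,\varphi\rangle_\Hilbert\to 0$ uniformly for $\varphi$ in the $\Hilbert_1$-unit ball. The quadratic-form version of Weyl's criterion then yields $\lambda\in\espec(H)$.

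For the reverse inclusion $\espec(H)\subset[E_1,\infty)$ I would use Neumann bracketing. Given $R>0$, decompose $\Omega_0=\Omin_R\cup\Omex_R$ with $\Omin_R\deff(-R,R)\times(-a,a)$, and split $h$ into the form sum $h_R^{\text{int}}\oplus h_R^{\text{ext}}$ corresponding to Neumann conditions at $s=\pm R$ (and retaining Dirichlet at $|t|=a$). Then $H\geq H_R^{\text{int}}\oplus H_R^{\text{ext}}$ in the form sense, and since $f$ is bounded above and below on the precompact set $\Omin_R$ by Assumption~\ref{Ass.Ass}, the operator $H_R^{\text{int}}$ has compact resolvent, so $\espec(H)\subset\espec(H_R^{\text{ext}})$. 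For any $\varepsilon>0$, choose $R$ so large that $|f-1|<\varepsilon$ on $\Omex_R$; then for every $\psi\in\Dom(h_R^{\text{ext}})$, the one-dimensional Dirichlet inequality in~$t$ gives
$$
h_R^{\text{ext}}[\psi]\geq (1-\varepsilon)\int_{\Omex_R}|\partial_2\psi|^2\,\der s\,\der t \geq (1-\varepsilon)E_1\int_{\Omex_R}|\psi|^2\,\der s\,\der t \geq \frac{1-\varepsilon}{1+\varepsilon}\,E_1\,\|\psi\|_\Hilbert^2,
$$
so $\inf\sigma(H_R^{\text{ext}})\geq (1-\varepsilon)(1+\varepsilon)^{-1}E_1$. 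Letting $\varepsilon\to 0$ (i.e.\ $R\to\infty$) yields $\inf\espec(H)\geq E_1$, which completes the proof.

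The main technical obstacle I anticipate lies in the bracketing step under our minimal regularity: since hypothesis~\eqref{Ass2} is not assumed, we do not a priori know that $\Dom(h)=W_0^{1,2}(\Omega_0)$, and in particular one cannot freely multiply elements of $\Dom(h)$ by smooth $s$-cut-offs and land back in $\Dom(h_R^{\text{int}})\oplus\Dom(h_R^{\text{ext}})$ without checking. The standard remedy is to verify the form inequality first on the core $C_0^\infty(\Omega_0)$ (where localisation is trivial and where~\eqref{compact} applies), and then to extend by density in the graph norm $\|\cdot\|_{\Hilbert_1}$. Once this is handled, all remaining estimates are routine.
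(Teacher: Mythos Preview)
Your proposal is correct and follows essentially the same strategy as the paper: the inclusion $[E_1,\infty)\subset\espec(H)$ is obtained via the quadratic-form version of Weyl's criterion with a singular sequence of the form $\psi_n(s,t)=\varphi_n(s)\chi_1(t)e^{i\lambda s}$ supported farther and farther out in~$s$, and the reverse inclusion is obtained by Neumann bracketing at $s=\pm R$ together with the one-dimensional Poincar\'e inequality in~$t$ and the limit $R\to\infty$. The technical point you flag about the form domain is handled in the paper simply by \emph{defining} $\Dom(h_{\mathrm{int}}^N)$ and $\Dom(h_{\mathrm{ext}}^N)$ as restrictions of elements of $\Dom(h)$, which makes the form inequality $h\ge h_{\mathrm{int}}^N\oplus h_{\mathrm{ext}}^N$ tautological; your density argument via $C_0^\infty(\Omega_0)$ is an equally valid alternative.
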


We establish the theorem as a consequence of two lemmata.
First we show that the energy of the propagating states
cannot descend below~$E_1$.   

\begin{lemma}\label{Lem.ess.lower}
Suppose Assumption~\ref{Ass.Ass}.
If~\eqref{vanish} holds, then   
\begin{equation*}
\inf \espec (H) \geq E_1 \,.
\end{equation*}
\end{lemma}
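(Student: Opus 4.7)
The plan is to reduce the estimate to a transverse one-dimensional Dirichlet inequality in the exterior region, via a Neumann-bracketing argument supported by the fact that, under~\eqref{vanish}, the metric coefficient $f$ of~\eqref{Eq metric} tends uniformly to $1$ as $|s|\to\infty$.

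First I would verify this uniform convergence: since $|t|\leq a$, \eqref{Eq metric} gives
\begin{equation*}
  |f(s,t)^2-1| \leq 2a\,|(k\cdot\Theta)(s)| + a^2\,(k\cdot\Theta)(s)^2 + a^2\,|\Theta'(s)|^2,
\end{equation*}
whose right-hand side vanishes by~\eqref{vanish}; hence for every $\varepsilon\in(0,1)$ there exists $R_\varepsilon>0$ such that $1-\varepsilon\leq f(s,t)\leq 1+\varepsilon$ for all $|s|>R_\varepsilon$ and $|t|<a$. Then for $R\geq R_\varepsilon$ I would introduce the form $h_N$ with the same integrand as $h$ in~\eqref{form} but on the enlarged domain obtained by dropping matching across the two cuts $s=\pm R$ (while retaining the Dirichlet condition on $\partial\Omega_0$). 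Since $\Dom(h)\subset\Dom(h_N)$ with equal values, $h\geq h_N$ in the form sense, and min-max yields $\inf\espec(H)\geq\inf\espec(H_N)$. The operator $H_N$ is the direct sum of three pieces: an interior piece on the bounded rectangle $(-R,R)\times(-a,a)$, where $f$ is bounded above (by local boundedness) and below (by~\eqref{Ass.bound}), so it has compact resolvent and empty essential spectrum; and two exterior pieces to which I would apply the estimate below.

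For any $\psi$ in the form domain of an exterior piece, I would discard the non-negative $|f^{-1/2}\partial_1\psi|^2$-term, use $f\geq 1-\varepsilon$, invoke Fubini together with the sharp one-dimensional Dirichlet--Poincar\'e inequality $\int_{-a}^a|\partial_2 u(s,t)|^2\,\der t\geq E_1\int_{-a}^a|u(s,t)|^2\,\der t$ for almost every~$s$, and finally use $f\leq 1+\varepsilon$ on the right to obtain
\begin{equation*}
  h_N[\psi] \geq (1-\varepsilon)\,E_1 \int |\psi|^2 \,\der s\,\der t
  \geq \frac{1-\varepsilon}{1+\varepsilon}\,E_1\,\|\psi\|_\Hilbert^2.
\end{equation*}
Thus $\inf\espec(H)\geq \frac{1-\varepsilon}{1+\varepsilon}\,E_1$ for every $\varepsilon\in(0,1)$, and letting $\varepsilon\to 0^+$ concludes.

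The main obstacle, in my view, is not the analytic estimate---which is nearly tautological once $f$ is close to~$1$ on the exterior---but the careful justification that Neumann decoupling operates correctly in the weighted Hilbert space~\eqref{Hilbert} under only the minimal regularity of Assumption~\ref{Ass.Ass}. In particular, one has to check that $\Dom(h)$ genuinely sits inside the enlarged Neumann domain (the inclusion~\eqref{compact} is relevant here) and that $f$ is bounded above on the interior rectangle notwithstanding the absence of the global hypothesis~\eqref{Ass2}. Once these points are settled, the transverse Dirichlet inequality does the rest.
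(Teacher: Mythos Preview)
Your proposal is correct and follows essentially the same route as the paper: Neumann bracketing at $|s|=R$, discreteness of the interior piece, and on the exterior dropping the longitudinal term and applying the transverse Dirichlet--Poincar\'e inequality together with the two-sided bounds on~$f$, then sending the cut to infinity. The only cosmetic difference is that the paper writes the exterior bound as $E_1\,(\essinf_{\mathrm{ext}}f)(\esssup_{\mathrm{ext}}f)^{-1}$ and lets $s_0\to\infty$, whereas you first fix~$\varepsilon$ and then let $\varepsilon\to0$; the concerns you raise about the weighted Neumann decoupling and local boundedness of~$f$ are exactly the points the paper handles in the same way.
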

\begin{proof}
Given any arbitrary positive number~$s_0$,
we divide~$\Omega_0$ into an interior and an exterior part 
with respect to~$s_0$ as follows:
\begin{equation*}
  \Omega_{0,\mathrm{int}} \deff (-s_0, s_0) \times (-a,a) 
  \,, \qquad
  \Omega_{0,\mathrm{ext}} \deff \Om_0 \setminus \overline{\Omega_{0,\mathrm{int}}} 
  \,.
\end{equation*}
Imposing Neumann boundary conditions on the segments $\{\pm s_0\} \times (-a,a)$,
one gets the lower bound 
\begin{equation}\label{bracketing}
  H \geq H^N \deff H^N_{\mathrm{int}} \oplus H^N_{\mathrm{ext}}
\end{equation}
in the sense of forms in~$\Hilbert$.
Here~$H^N_{\mathrm{int}}$ is the operator in 
$\Hilbert_\mathrm{int}\deff\sii(\Omega_{0,\mathrm{int}},f(s,t)\,\der s \, \der t)$
associated with the quadratic form~$h_\mathrm{int}^N$ that acts as~$h$
but whose domain is restricted to $\Omega_{0,\mathrm{int}}$.
More specifically, 
$$
\begin{aligned}
  h_\mathrm{int}^N[\psi] &\deff  
  \int_{\Omega_{0,\mathrm{int}}} 
  \frac{|\partial_1\psi(s,t)|^2}{f(s,t)} \, \der s \, \der t
  + \int_{\Omega_{0,\mathrm{int}}} 
  |\partial_2\psi(s,t)|^2 \, f(s,t) \, \der s \, \der t
  \,, 
  \\
  \Dom(h_\mathrm{int}^N) &\deff \left\{
  \psi \deff \tilde{\psi} \upharpoonright \Omega_{0,\mathrm{int}} : \
  \tilde{\psi} \in \Dom(h) 
  \right\}
  \,. 
\end{aligned}
$$
Note that no boundary conditions are imposed on the parts 
$\{\pm s_0\} \times (-a,a)$ of the boundary $\partial\Omega_{0,\mathrm{int}}$
in the form domain, while Dirichle boundary conditions 
are considered on the remaining parts of the boundary.
The operator~$H^N_{\mathrm{ext}}$, the form~$h^N_{\mathrm{ext}}$ 
and the Hilbert space $\Hilbert_\mathrm{ext}$ are defined analogously.

Employing the Neumann bracketing described above, 
we have 
\begin{equation*}
  \inf \espec (H) 
  \geq \inf \espec (H^N) 
  = \inf\espec(H^N_{\mathrm{ext}})
  \geq \inf\sigma(H^N_{\mathrm{ext}})
  \,.
\end{equation*}
Here the first inequality follows from~\eqref{bracketing}  
via the minimax principle,
the equality is due to the fact that 
the spectrum of $H^N_{\mathrm{int}}$ is purely discrete
and the last inequality is trivial.	
Hence, it is sufficient to find a suitable lower bound 
to the spectrum of $H^N_{\mathrm{ext}}$.
To this aim, for every $\psi \in \Dom(h_\mathrm{ext}^N)$, 
we estimate the quadratic form as follows:
\begin{align*}
  h^N_{\mathrm{ext}}[\psi] 
  &\geq \int_{\Omega_{0,\mathrm{ext}}} 
  |\partial_2\psi(s,t)|^2 \, f(s,t) \, \der s \, \der t
  \\
  &\geq \big(\essinf_{\Omega_{0,\mathrm{ext}}} f\big)
  \int_{\Omega_{0,\mathrm{ext}}} 
  |\partial_2\psi(s,t)|^2 \, \der s \, \der t
  \\
  &\geq E_1 \, \big(\essinf_{\Omega_{0,\mathrm{ext}}} f\big)
  \int_{\Omega_{0,\mathrm{ext}}} 
  |\psi(s,t)|^2 \, \der s \, \der t
  \\
  &\geq E_1 \, \big(\essinf_{\Omega_{0,\mathrm{ext}}} f\big) 
  \, \big(\esssup_{\Omega_{0,\mathrm{ext}}} f\big)^{-1}
  \int_{\Omega_{0,\mathrm{ext}}} 
  |\psi(s,t)|^2 \, f(s,t) \, \der s \, \der t
  \\
  &= E_1 \, \big(\essinf_{\Omega_{0,\mathrm{ext}}} f\big) 
  \, \big(\esssup_{\Omega_{0,\mathrm{ext}}} f\big)^{-1}
  \|\psi\|^2_{\Hilbert_{\mathrm{ext}}} \, .
\end{align*}
Consequently,
$$
  \inf\sigma_\mathrm{ess}(H)
  \geq E_1 \, \big(\essinf_{\Omega_{0,\mathrm{ext}}} f\big) 
  \, \big(\esssup_{\Omega_{0,\mathrm{ext}}} f\big)^{-1}
  \,.
$$
Taking the limit $s_0 \to \infty$,   
the asymptotic hypothesis~\eqref{vanish} ensures 
that the right-hand side tends to~$E_1$,
while the left-hand side is independent of~$s_0$. 
This concludes the proof of the lemma. 
\end{proof}

It remains to show that all energies above~$E_1$ belong 
to the essential spectrum.
\begin{lemma}\label{Lem.ess.upper}
Suppose Assumption~\ref{Ass.Ass}.
If~\eqref{vanish} holds, then   
\begin{equation*}
  \espec (H) \supset [E_1,\infty) \,.
\end{equation*}
\end{lemma}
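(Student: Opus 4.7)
The plan is to apply Weyl's criterion: for every $E = E_1 + k^2 \in [E_1,\infty)$ (with $k\geq 0$), I would construct a singular sequence $\{\psi_n\}$ satisfying $\|\psi_n\|_\Hilbert \to 1$, $\psi_n \rightharpoonup 0$ in $\Hilbert$, and $\|(H-E)\psi_n\|_\Hilbert \to 0$. As an ansatz I take the flat-strip generalized eigenfunction, localized far from the origin,
\[
  \psi_n(s,t) \deff \varphi_n(s)\, e^{iks}\, \chi_1(t),
\]
where $\chi_1(t) \deff a^{-1/2}\sin(\pi(t+a)/(2a))$ is the normalized first transverse Dirichlet mode (so $-\chi_1'' = E_1\chi_1$ and $\chi_1(\pm a)=0$), and $\varphi_n(s) \deff n^{-1/2}\varphi((s-s_n)/n)$ for some fixed $\varphi \in C_0^\infty((0,1))$ of unit $L^2$-norm and a sequence $s_n \to +\infty$. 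Then $\|\varphi_n\|_{L^2(\R)} = 1$, $\|\varphi_n^{(j)}\|_{L^2(\R)} = O(n^{-j})$ for $j \geq 1$, and $\supp\psi_n \subset [s_n, s_n+n]\times[-a,a]$ recedes to infinity.

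From the explicit formula~\eqref{Eq metric} together with the bound $|t|\leq a$ and the hypothesis~\eqref{vanish}, it is immediate that $f \to 1$ uniformly in $t\in[-a,a]$ as $|s|\to\infty$. This gives $\|\psi_n\|_\Hilbert \to 1$, and weak convergence $\psi_n \rightharpoonup 0$ follows from the receding support plus density of $C_0^\infty(\Omega_0)$ in $\Hilbert$. The core step is to verify $\|(H-E)\psi_n\|_\Hilbert \to 0$. A formal expansion via~\eqref{Eq LapBel strip}, using $-\chi_1'' = E_1\chi_1$ and $\partial_s(e^{iks}\chi_1) = ik\, e^{iks}\chi_1$, decomposes $(H-E)\psi_n$ into four pieces: (i) a multiplicative remainder $k^2(1-f^{-2})\psi_n$; (ii) an envelope-derivative correction $-f^{-2}(\varphi_n'' + 2ik\varphi_n')e^{iks}\chi_1$; (iii) a transverse drift $-f^{-1}(\partial_t f)\varphi_n e^{iks}\chi_1'$; and (iv) a longitudinal drift $f^{-3}(\partial_s f)(\varphi_n' + ik\varphi_n)e^{iks}\chi_1$. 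Pieces (i)--(iii) are routine: (i) and (iii) vanish in $\Hilbert$-norm because $1-f^{-2}\to 0$ and $\partial_t f^2 = -2(k\cdot\Theta)(1-t k\cdot\Theta)+2t|\Theta'|^2 \to 0$ uniformly on $\supp\psi_n$, respectively, by~\eqref{vanish}; (ii) vanishes since $\|\varphi_n^{(j)}\|_{L^2} \to 0$.

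The main obstacle is piece (iv): the hypothesis~\eqref{vanish} does not control $\partial_s f$ at infinity, and under Assumption~\ref{Ass.Ass} alone $\partial_s f$ need not even exist pointwise, since $k\cdot\Theta$ and $|\Theta'|^2$ are only $L^\infty$ as functions of $s$. I would circumvent this by passing to the form level: rather than evaluating $(H-E)\psi_n$ pointwise, bound $|h[\psi_n,\phi] - E\langle\psi_n,\phi\rangle_\Hilbert|$ directly against test functions $\phi \in \Dom(h)$, integrating by parts in $s$ so that $\partial_s$ lands on $\phi$ and on the smooth factors $\varphi_n, e^{iks}$, but never on $f$. The dominant $ik\varphi_n$ contribution then pairs, modulo vanishing $(1-f^{-2})$-errors, with the $E$-term, while the $\varphi_n'$-contribution is absorbed by $\|\varphi_n'\|_{L^2}\to 0$. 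This yields an estimate $|h[\psi_n,\phi] - E\langle\psi_n,\phi\rangle_\Hilbert| \leq \eps_n\|\phi\|_\Hilbert$ with $\eps_n \to 0$; by duality it identifies $(H-E)\psi_n$ as an element of $\Hilbert$ with $\|(H-E)\psi_n\|_\Hilbert \to 0$ (simultaneously certifying $\psi_n \in \Dom(H)$), and Weyl's criterion concludes $E \in \espec(H)$.
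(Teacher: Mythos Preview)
Your approach is essentially the paper's: the same Weyl ansatz $\psi_n=\varphi_n e^{iks}\chi_1$ supported far out, the same recognition that $\partial_s f$ is uncontrolled under Assumption~\ref{Ass.Ass}, and the same idea of working at the form level so that no derivative ever lands on~$f$ in the $s$-direction. The gap is in your final duality step. Once you keep one $\partial_s$ on the test function~$\phi$ (as you must, to avoid $\partial_s f$), the residual term after splitting $f^{-1}=1+(f^{-1}-1)$ is
\[
  \int_{\Omega_0} (f^{-1}-1)\,\overline{\partial_1\phi}\,\partial_1\psi_n \,\der s\,\der t,
\]
and this can only be estimated by $\|f^{-1}-1\|_{\infty,n}\,\|\partial_1\phi\|\,\|\partial_1\psi_n\|$. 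The factor $\|\partial_1\phi\|$ is controlled by $\|\phi\|_{\Hilbert_1}$, \emph{not} by $\|\phi\|_\Hilbert$. So what you actually obtain is
\[
  |h(\psi_n,\phi)-E(\psi_n,\phi)_\Hilbert| \leq \eps_n\,\|\phi\|_{\Hilbert_1},
\]
i.e.\ $\|(H-E)\psi_n\|_{\Hilbert_{-1}}\to 0$. This does \emph{not} certify $\psi_n\in\Dom(H)$ nor $\|(H-E)\psi_n\|_\Hilbert\to 0$; under Assumption~\ref{Ass.Ass} alone (no control on $(k\cdot\Theta)'$ or $|\Theta''|$) there is no reason for your smooth $\psi_n$ to lie in the operator domain at all, since $H\psi_n$ in the distributional sense contains $\partial_s f$.

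The fix is exactly what the paper does: invoke the quadratic-form version of Weyl's criterion (see~\cite[Thm.~5]{KL} as cited), which only requires $\{\psi_n\}\subset\Dom(h)$, $\liminf\|\psi_n\|_\Hilbert>0$, and convergence $\|(H-E)\psi_n\|_{\Hilbert_{-1}}\to 0$ in the dual of the form domain. With that tool your estimates go through verbatim and give $E\in\sigma(H)$; since an interval cannot consist of discrete eigenvalues, $[E_1,\infty)\subset\sigma_{\mathrm{ess}}(H)$ follows.
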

\begin{proof}
Our argument is based on the Weyl criterion adapted to quadratic forms
(see~\cite[Thm.~5]{KL} for the proof of the criterion 
and~\cite[Lem.~5.3]{KKriz} for an original application to quantum waveguides).  
It states that to prove that~$\eta$ is in the spectrum of the operator~$H$, 
it is enough to find a sequence 
$\lbrace \psi_n \rbrace _{n\in\N} \subset \Dom\left(h\right)$ 
such that
\begin{enumerate}
\item[(i)] 
$
\displaystyle
\liminf_{n\to\infty} \| \psi_n \|_{\Hilbert} > 0
$, 
\item[(ii)] 
$
\displaystyle
\lim_{n\to\infty} 
\| \left(H-\eta\right) \psi_n \|_{\Hilbert_{-1}} 
= 0$,
\end{enumerate}
where $\Hilbert_{-1}$ denotes the dual space of $\Hilbert_1$. 
Notice that the mapping $H+1 : \Hilbert_1 \to \Hilbert_{-1}$ 
is an isomorphism and that the dual norm is given by
\begin{equation*} 
  \| \psi \|_{\Hilbert_{-1}} 
  = \sup_{\stackrel[\phi \not= 0]{}{\phi \in \Dom\left(h\right)}}
  \frac{|\tensor[_{\Hilbert_1}]{(\phi,\psi)}{_{\Hilbert_{-1}}}|}
  {\|\phi\|_{\Hilbert_1}}
  \,,
\end{equation*}
where  
$\tensor[_{\Hilbert_1}]{( \cdot, \cdot )}{_{\Hilbert_{-1}}}$
denotes the duality pairing between~$\Hilbert_1$ and~$\Hilbert_{-1}$.
In contrast to the conventional Weyl criterion,
the advantage of this characterisation is that the sequence
is required to lie in the form domain only 
and the limit in~(ii) is taken in the weaker topology of~$\Hilbert_{-1}$.

We parameterise~$\eta$ by setting 
$\eta = \lambda^2 + E_1 $ with any $\lambda \in \R$. 
Note that the differential equations $-\Delta\psi=\eta\psi$
is solved by $(s,t) \mapsto \chi_1\left(t\right) \mathrm{e}^{i \lambda s}$,
where
\begin{equation}\label{chi1}
  \chi_1(t)  \deff \sqrt{\frac{1}{a}} \, \cos\big(\sqrt{E_1} t\big)  
\end{equation}
denotes the normalised eigenfunction of the Dirichlet Laplacian
in~$(-a,a)$ corresponding to the eigenvalue~$E_1$.
However, this solution does not even belong to~$\Hilbert$.
To get an approximate solution 
which simultaneously belongs to $\Dom(h)$ 
and is ``localised at infinity'', 
for every $n \in \N \deff \{1,2,\dots\}$ we set
\begin{equation*}
  \varphi_n\left(s\right)  
  \deff \frac{1}{\sqrt{n}} \, \varphi \left( \frac{s}{n} - n \right) ,
\end{equation*}
where $\varphi \in C_0^\infty(\R)$ is any function 
such that $\supp \varphi \subset \left(-1,1\right)$
and $\|\varphi\|_{\sii(\R)}=1$.
The normalisation factor is chosen in such a way that 
\begin{equation}\label{varphis}
  \|\varphi_n\|_{\sii(\R)} = \|\varphi\|_{\sii(\R)} = 1
  \,, \qquad
  \|\varphi_n'\|_{\sii(\R)} = n^{-1} \, \|\varphi'\|_{\sii(\R)} 
  \,, \qquad
  \|\varphi_n''\|_{\sii(\R)} = n^{-2} \, \|\varphi''\|_{\sii(\R)} 
  \,.
\end{equation}
Notice also that $\supp \varphi _n \subset \left(n^2-n, n^2 + n\right)$.
We then define
\begin{equation*}
  \psi_n\left(s,t\right) 
  \deff \varphi_n\left(s\right) \chi_1\left(t\right) \mathrm{e}^{i \lambda s}
  \,.
\end{equation*}
Recalling~\eqref{compact}, 
we clearly have $\psi_n \in \Dom(h)$ for every $n \in \N$.
Our aim is to show that $\lbrace \psi _n \rbrace_{n\in\N}$ 
satisfies conditions~(i) and~(ii) of the modified Weyl criterion.
 
First of all, notice that, 
due to~\eqref{Ass.bound} 
and the normalisations of~$\varphi$ and~$\chi_1$, 
we have 
$$
  \| \psi_n \|_\Hilbert^2
  \geq 1 - a \, \|k\cdot\Theta\|_{L^\infty(\R)} > 0
  \,,
$$
so the condition~(i) clearly holds. 
Next, for every $\phi \in \Dom(h)$, we have 
$$
\begin{aligned}
  |\tensor[_{\Hilbert_1}]{(\phi,(H-\eta)\psi_n)}{_{\Hilbert_{-1}}}|
  &= |h(\phi,\psi_n) - \eta (\phi,\psi_n)_{\Hilbert}|
  \\
  &\leq |h_1(\phi,\psi_n) - \lambda^2 (\phi,\psi_n)_{\Hilbert}|
  + |h_2(\phi,\psi_n) - E_1 (\phi,\psi_n)_{\Hilbert}|
  \,,
\end{aligned}  
$$
where, recalling~\eqref{form}, 
$h_1[\psi] \deff \|f^{-1/2}\partial_1\psi\|_\Hilbert^2$,
$h_2[\psi] \deff \|f^{1/2}\partial_2\psi\|_\Hilbert^2$,
$\Dom(h_1) \deff \Dom(h) \deffin \Dom(h_2)$. 
Integrating by parts and using that $-\chi_1''=E_1\chi_1$
together with the normalisations of~$\varphi$ and~$\chi_1$, 
we have
\begin{equation}\label{h1}
\begin{aligned}
  |h_2(\phi,\psi_n) - E_1 (\phi,\psi_n)_{\Hilbert}|
  &= \left|
  \int_{\Omega_0} \bar{\phi}(s,t) \, \partial_2\psi_n(s,t) \, \partial_2 f(s,t)
  \, \der s \, \der t
  \right|
  \\
  &\leq \|\phi\|_{\Hilbert} \, \|\partial_2\psi_n\|_{\Hilbert_0} \,
  \left\|\frac{\partial_2 f}{\sqrt{f}}\right\|_{\infty, n}
  \\
  &\leq \|\phi\|_{\Hilbert_1} \, \sqrt{E_1} \,
  \left\|\frac{\partial_2 f}{\sqrt{f}}\right\|_{\infty, n}
  \,,
\end{aligned}  
\end{equation}
where 
$
  \|\cdot\|_{\infty,n} 
  \deff \|\cdot\|_{L^\infty(\supp\varphi_n \times (-a,a))}
$.
At the same time, we have
$$
\begin{aligned}
  h_1(\phi,\psi_n) 
  &= 
  \int_{\Omega_0} \partial_1\bar{\phi}(s,t) \, \partial_1\psi_n(s,t) 
  \left[\frac{1}{f(s,t)}-1\right]
  \der s \, \der t
  - \int_{\Omega_0} \bar{\phi}(s,t) \, \partial_1^2\psi_n(s,t) 
  \, \der s \, \der t
  \,,
  \\
  (\phi,\psi_n)_{\Hilbert} 
  &= \int_{\Omega_0} \bar{\phi}(s,t) \, \psi_n(s,t) 
  \left[f(s,t)-1\right]
  \der s \, \der t
  + \int_{\Omega_0} \bar{\phi}(s,t) \, \psi_n(s,t) \, \der s \, \der t
  \,.
\end{aligned}  
$$
Consequently, using that  
$
  -\partial_1^2\psi_n(s,t) - \lambda^2 \psi_n(s,t)
  = [-\varphi_n''(s) - 2 i \lambda \varphi_n'(s)] \, 
  \mathrm{e}^{i\lambda s} \, \chi_1(t)
$
and the normalisations of~$\varphi$ and~$\chi_1$ again,
we get
\begin{equation}\label{h2}
\begin{aligned}
  |h_1(\phi,\psi_n) - \lambda^2 (\phi,\psi_n)_{\Hilbert}| 
  &\leq  \|f^{-1/2}\phi\|_{\Hilbert} \, \|\partial_1\psi_n\|_{\Hilbert_0} \,
  \left\|\frac{1}{\sqrt{f}}-\sqrt{f}\right\|_{\infty, n}
  \\
  &\quad + \lambda^2 \|\phi\|_{\Hilbert} \, \|\psi_n\|_{\Hilbert_0} \,
  \left\|\sqrt{f}-\frac{1}{\sqrt{f}}\right\|_{\infty, n}
  \\
  &\quad + \|\phi\|_{\Hilbert} \, \|\varphi_n''+ 2 i \lambda \varphi_n'\|_{\sii(\R)} \,
  \left\|\frac{1}{\sqrt{f}}\right\|_{\infty, n}
  \\
  &\leq \|\phi\|_{\Hilbert_1} \, \|\varphi_n'+i\lambda\varphi_n\|_{\sii(\R)} \,
  \left\|\frac{1}{\sqrt{f}}-\sqrt{f}\right\|_{\infty, n}
  \\
  &\quad + \lambda^2 \|\phi\|_{\Hilbert_1} \, 
  \left\|\sqrt{f}-\frac{1}{\sqrt{f}}\right\|_{\infty, n}
  \\
  &\quad + \|\phi\|_{\Hilbert_1} \, \|\varphi_n''+ 2 i \lambda \varphi_n'\|_{\sii(\R)} \,
  \left\|\frac{1}{\sqrt{f}}\right\|_{\infty, n}
  \,.
\end{aligned}  
\end{equation}
Putting~\eqref{h1} and~\eqref{h2} together, 
we finally arrive at
$$
\begin{aligned}
  \|(H-\eta)\psi_n\|_{\Hilbert_{-1}} 
  &\leq \sqrt{E_1} \,
  \left\|\frac{\partial_2 f}{\sqrt{f}}\right\|_{\infty, n}
  + \big(\|\varphi_n'+i\lambda\varphi_n\|_{\sii(\R)}+\lambda^2\big)
  \left\|\frac{1}{\sqrt{f}}-\sqrt{f}\right\|_{\infty, n}
  \\
  &\quad+ \|\varphi_n''+ 2 i \lambda \varphi_n'\|_{\sii(\R)} \,
  \left\|\frac{1}{\sqrt{f}}\right\|_{\infty, n}
  \,.
\end{aligned} 
$$
Here the first line on the right-hand side tends to zero as $n \to \infty$
due to~\eqref{vanish}, while the second line vanishes as $n \to \infty$
due to~\eqref{varphis}.
This establishes~(ii) and the lemma is proved. 
\end{proof}

Theorem~\ref{Thm.ess} follows as a direct consequence 
of Lemmata~\ref{Lem.ess.lower} and~\ref{Lem.ess.upper}.

\section{Purely bent strips}\label{Sec.bent}
%
In this section, we consider strips constructed in such a way 
that the twisting vector~$\Theta$ is constant, 
so that~$N_\Theta$ is relatively parallel and~$\Omega$ is untwisted.
We show that the geodesic curvature $k\cdot\Theta$ 
acts as an attractive interaction in the sense that 
it diminishes the spectrum.
Recall that $k\cdot\Theta$ can be equal to zero even if $\kappa \not=0$
(like, for instance, in Figure~\ref{Fig.helix}, right). 

\begin{theorem}\label{Thm.bent}
Suppose Assumption~\ref{Ass.Ass}. 
If $\Theta' = 0$ and $k\cdot \Theta \not= 0$, then 
\begin{equation*}
\inf \sigma (H) < E_1 \,.
\end{equation*}
\end{theorem}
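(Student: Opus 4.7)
The plan is to establish $\inf\sigma(H) < E_1$ via the variational principle: it suffices to exhibit $\psi \in \Dom(h)$ with $h[\psi] < E_1 \|\psi\|_\Hilbert^2$. Under the hypothesis $\Theta'=0$, the Jacobian~\eqref{Eq metric} simplifies dramatically to $f(s,t) = 1 - tK(s)$ where $K \deff k\cdot\Theta \in L^\infty(\R)\setminus\{0\}$, and this algebraic simplification is what unlocks the construction. Following the classical Exner--\v{S}eba strategy for bent waveguides, I would introduce an odd-in-$t$ transverse perturbation which couples attractively to $K$ through the odd part of $f$.

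Concretely, I would take the trial function
\begin{equation*}
\psi_\eps(s,t) \deff (1+\eps t)\,\chi_1(t)\,\varphi(s),
\end{equation*}
with $\varphi \in C_0^\infty(\R)$ and $\eps \in \R$ to be chosen; by~\eqref{compact}, $\psi_\eps \in \Dom(h)$. A direct computation---integrating by parts in $t$ using $\chi_1''=-E_1\chi_1$ and $\chi_1(\pm a)=0$, and exploiting the parity in $t$ of $\chi_1,\chi_1'$ together with the explicit form of $f$---should yield the identity
\begin{equation*}
h[\psi_\eps] - E_1\|\psi_\eps\|_\Hilbert^2 = \int_\R F_\eps(s)\,\varphi'(s)^2\,\der s + \int_\R \varphi(s)^2\bigl(\eps^2 - \eps K(s)\bigr)\der s,
\end{equation*}
where $F_\eps(s) \deff \int_{-a}^a (1+\eps t)^2\chi_1(t)^2/f(s,t)\,\der t$ is bounded uniformly in $s$ and small $\eps$ by some $M>0$ thanks to~\eqref{Ass.bound}. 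The quadratic polynomial in $\eps$ under the second integral is the signature of the attractive curvature-induced coupling: when $\int\varphi^2 K \neq 0$, its minimum in $\eps$ contributes the strictly negative value $-(\int\varphi^2 K)^2/(4\|\varphi\|_2^2)$.

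The main task is then to choose $\varphi$ so that this negative contribution dominates the positive kinetic term $M\|\varphi'\|_2^2$. Since $K\not\equiv 0$, after possibly flipping the sign of $\eps$ one can locate an interval $I$ and $\delta>0$ with $K\geq\delta$ a.e.\ on $I$; I would take $\varphi=\varphi_L$ to be a smooth plateau of length $L$ supported around $I$, ensuring $\int \varphi_L^2 K$ stays bounded away from $0$ while $\|\varphi_L'\|_2^2$ is driven small by enlarging $L$, then optimize $\eps$. The main obstacle is that this simple construction is only guaranteed to succeed when the weighted average of $K$ is large enough relative to $M$; for a truly arbitrary nonzero $K\in L^\infty(\R)$ one needs a curvature-adapted refinement, namely $\psi_\eps=\bigl(1+\eps tK_\sigma(s)\bigr)\chi_1(t)\varphi(s)$ with $K_\sigma$ a mollification of $K$, which self-adjusts the transverse correction to the local magnitude of $K$. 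The analogous calculation then produces $\int_\R\varphi^2(\eps^2 K_\sigma^2-\eps K K_\sigma)\,\der s$ in place of the second integral above, which at $\eps=1/2$ and $K_\sigma\to K$ converges to $-\tfrac14\int\varphi^2 K^2$, the familiar attractive effective-potential integral known from the thin-strip limit. This is strictly negative for any nontrivial $K$; the technical heart of the argument is then a careful simultaneous balance of the three small parameters---the amplitude $\eps$, the mollification scale $\sigma$, and the longitudinal spread of $\varphi$---to ensure that the error terms in $h_1$ involving $K_\sigma'$ do not overwhelm this gain.
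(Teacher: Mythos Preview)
Your displayed identity
\[
h[\psi_\eps] - E_1\|\psi_\eps\|_\Hilbert^2 \;=\; \int_\R F_\eps(s)\,\varphi'(s)^2\,\der s \;+\; \int_\R \varphi(s)^2\bigl(\eps^2 - \eps K(s)\bigr)\der s
\]
is correct, and it is exactly the computation the paper performs as well. The difference is in how the trial function is organised, and that difference is what turns your three-parameter balance into a triviality.

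The paper \emph{decouples} the spreading cutoff from the transverse perturbation: it takes
\[
\psi_{n,\eps}(s,t) \;=\; \varphi_n(s)\,\chi_1(t) \;+\; \eps\,\eta(s)\,t\,\chi_1(t),
\]
where $\varphi_n(s)=\varphi_1(s/n)$ spreads out while $\eta\in C_0^\infty(\R)$ is a \emph{fixed} compactly supported function chosen so that $\int_\R\eta\,K\neq 0$ (possible since $K\not\equiv 0$). Because $\varphi_n\equiv 1$ on $\supp\eta$ for large~$n$, the cross term $2\eps\,h_1(\varphi_n\chi_1,\eta t\chi_1)=-\eps\int_\R\eta K$ is a fixed nonzero number independent of~$n$; the base term $h_1[\varphi_n\chi_1]\to 0$ by~\eqref{parabolic}; and $\eps^2 h_1[\eta t\chi_1]$ is a fixed constant. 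One then picks the sign and size of~$\eps$, then~$n$ large---no mollification, no competing limits.

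Your ansatz forces $\eta=\varphi$, which couples the two roles and produces the obstruction you correctly identify: with $\varphi=\varphi_n$ the term $\eps^2\|\varphi_n\|_2^2\sim n$ blows up, while $\int\varphi_n^2 K$ is uncontrolled (indeed vanishes identically for any compactly supported $K$ with $\int K=0$, once $n$ is large). Your mollified repair $\psi_\eps=(1+\eps t K_\sigma)\chi_1\varphi$ \emph{does} work, but only after you localise: taking $K_\sigma$ to be a smooth \emph{compactly supported} approximant of $K$ (say a mollification of $K\cdot\mathbf{1}_J$ on an interval with $\int_J K^2>0$) makes all the $K_\sigma'$-errors into fixed finite constants, and the argument closes exactly as above. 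But at that point $K_\sigma$ is just a specific choice of the paper's~$\eta$, and the ``careful simultaneous balance of the three small parameters'' collapses to: fix $K_\sigma$, take $\eps$ small, take $n$ large. Mollifying $K$ globally, by contrast, gives $\|K_\sigma'\|_\infty\sim\sigma^{-1}$ with no decay at infinity, and the $I_1$-type error $\eps^2\int K_\sigma'^2\varphi_n^2$ can grow like $n/\sigma^2$, competing with the gain; that version does not obviously close.
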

\begin{proof}
The proof is based on the variational strategy of finding 
a trial function $\psi \in \Dom(h)$ such that
\begin{equation}\label{h.shifted}
h_1[\psi] \deff h[\psi] - E_1 \| \psi \|_\Hilbert^2 < 0 \,. 
\end{equation}
Following~\cite{K1}, we shall achieve the strict inequality
by mollifying the first transverse eigenfunction~$\chi_1$
introduced in~\eqref{chi1}.

Let $\varphi_1 \in C_0^\infty(\R)$ be a real-valued function 
such that $0 \leq \varphi_1 \leq 1$, $\varphi_1=1$ on $[-1,1]$
and $\varphi_1=0$ on $\R \setminus [-2,2]$.
Setting $\varphi_n(s) \deff \varphi_1(s/n)$ for every $n \in \N$, 
we get a family of functions from $W^{1,2}(\R)$ such that
$\varphi_n \to 1$ pointwise as $n \to \infty$ and 
\begin{equation}\label{parabolic}
  \|\varphi_n'\|_{\sii(\R)}^2 
  = n^{-1} \, \|\varphi_1'\|_{\sii(\R)}^2 
  \xrightarrow[n \to \infty]{}
  0
  \,.
\end{equation}
Defining
\begin{equation*}
  \psi_n(s,t) \deff \varphi_n(s) \, \chi_1 (t) 
\end{equation*}
and integrating by parts with help of $-\chi_1''=E_1\chi_1$,
we have
$$
  h_1[\psi_n] = 
  \int_{\Omega_0} \frac{|\partial_1\psi_n(s,t)|^2}{f(s,t)} \, \der s \, \der t
  + \frac{1}{2} \int_{\Omega_0} |\psi_n(s,t)|^2 \, \partial_2^2 f(s,t) \, \der s \, \der t
  = \int_{\Omega_0} \frac{|\varphi_n'(s)|^2|\chi_1(t)|^2}{f(s,t)} \, \der s \, \der t
  \,.
$$
Here the second equality follows from the fact 
that the Jacobian~$f$ of the metric~\eqref{Eq metric} reduces to
\begin{equation*}
  f(s,t) = 1 -t \, k(s) \cdot \Theta(s) 
\end{equation*}
provided that~$\Theta$ is constant, 
so it is linear in the second variable and $\partial_2^2f=0$.
Using~\eqref{Ass.bound} and~\eqref{parabolic}, 
we therefore conclude that
\begin{equation}\label{step1}
  \lim_{n\to\infty} h_1[\psi_n] = 0
  \,.
\end{equation}

It follows that the functional~$h_1$ vanishes at~$\chi_1$ 
in a generalised sense. The next (and last) step in our strategy is to
show that~$\chi_1$ does not correspond to the minimum of the functional.
To this purpose, we add the following asymmetric perturbation 
\begin{equation*}
  \psi_{n,\eps} (s,t) \deff \psi_n (s,t) + \eps \, \phi(s,t)
  \,, \qquad \mbox{where} \qquad
  \phi(s,t) \deff \eta(s)\, t \, \chi_1 (t) \,, 
\end{equation*}
with $\eps \in \R$ and $\eta \in C_0^\infty(\R)$ being 
a non-zero real-valued function to be specified later.
Plugging it into the functional, we obviously have
\begin{equation}\label{q-form}
  h_1[\psi_{n,\eps}] 
  = h_1[\psi_n] + 2 \, \eps \, h_1(\psi_n,\phi) + \eps^2 \, h_1[\phi]
  \,.
\end{equation}
Since $\varphi_n=1$ on $\supp\eta$ for all sufficiently large~$n$,
the central term is in fact independent of~$n$ and equals
$$
\begin{aligned}
  h_1(\psi_n,\phi) 
  &= 
  \int_{\Omega_0} \eta(s) \, \chi_1'(t) \, [t\,\chi_1(t)]' \, f(s,t) \, \der s \, \der t
  - E_1 \int_{\Omega_0} \eta(s) \, t \, |\chi_1(t)|^2 \, f(s,t) \, \der s \, \der t
  \\
  &= -\int_{\Omega_0} 
  \eta(s) \, \chi_1'(t) \, t\chi_1(t) \, \partial_2 f(s,t) \, \der s \, \der t
  \\
  &= \frac{1}{2} \int_{\Omega_0} 
  \eta(s) \, |\chi_1(t)|^2 \, \partial_2 f(s,t) \, \der s \, \der t
  = -\frac{1}{2} \int_{\R} 
  \eta(s) \, (k\cdot\Theta)(s) \, \der s  
  \,.
\end{aligned}
$$
Here the second and third equalities follow by integrations by parts
using that $-\chi_1''=E_1\chi_1$ and $\partial_2^2 f = 0$.
Since $k\cdot\Theta$ is not identically equal to zero 
by the hypothesis of the theorem, it is possible to choose~$\eta$
in such a way that the last integral is positive.
Summing up, $h_1(\psi_n,\phi)$ equals a negative number 
for all sufficiently large~$n$.   
Coming back to~\eqref{q-form}, it is thus possible to choose
a positive~$\eps$ so small that sum of the last two terms
on the right-hand side of~\eqref{q-form} is negative. 
Then, recalling~\eqref{step1}, 
we can choose~$n$ so large that $h_1[\psi_{n,\eps}]<0$.	 
Hence, $\inf\sigma(H) < 0$ by the Rayleigh-Ritz variational characterisation
of the lowest point in the spectrum of~$H$.
\end{proof}

As a consequence of Theorem~\ref{Thm.bent}, 
if the strip is in addition asymptotically flat
in the sense of~\eqref{vanish} (of course, just the first limit
is relevant under the hypotheses of Theorem~\ref{Thm.bent}),
then the essential spectrum starts by~$E_1$ (\cf~Theorem~\ref{Thm.ess})
and the spectral threshold $\inf\sigma(H)$
necessarily corresponds to a discrete eigenvalue. 

\begin{corollary} 
In addition to the hypotheses of Theorem~\ref{Thm.bent},
let us assume~\eqref{vanish}. Then 
\begin{equation*}
  \sigma_\mathrm{disc}(H) \cap (0,E_1) \not= \varnothing
  \,.
\end{equation*}
\end{corollary}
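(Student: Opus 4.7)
The plan is to derive the corollary by combining the two principal results already in hand. Theorem~\ref{Thm.ess}, applied under the asymptotic flatness assumption~\eqref{vanish}, identifies the essential spectrum of~$H$ as $[E_1,\infty)$, while Theorem~\ref{Thm.bent}, whose hypotheses $\Theta'=0$ and $k\cdot\Theta\not\equiv 0$ are automatically inherited here, guarantees $\inf\sigma(H)<E_1$. Since a spectral value strictly below the bottom of the essential spectrum must be an isolated point of finite multiplicity, any element of $\sigma(H)\cap(-\infty,E_1)$ lies in $\sigma_\mathrm{disc}(H)$. In particular $\inf\sigma(H)$ itself qualifies.

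It remains only to promote the trivial lower bound $\inf\sigma(H)\geq 0$ (visible from non-negativity of the form~\eqref{form}) to the strict bound $\inf\sigma(H)>0$ needed for membership in the \emph{open} interval $(0,E_1)$. For this I would observe that the Dirichlet boundary condition at $t=\pm a$ encoded in $\Dom(h)$ supplies a transverse Poincar\'e inequality
\begin{equation*}
  \int_{\Omega_0}|\partial_2\psi(s,t)|^2\,\der s\,\der t
  \;\geq\; E_1 \int_{\Omega_0}|\psi(s,t)|^2\,\der s\,\der t,
\end{equation*}
valid on the flat Lebesgue space. Before invoking this I would check that the two-sided bound~\eqref{global} on the Jacobian $f$ is actually available here: the hypothesis $\Theta\in C^{0,1}$ renders $|\Theta'|$ locally essentially bounded, and the second part of~\eqref{vanish} forces decay at infinity, so jointly they yield $|\Theta'|\in L^\infty(\R)$, which is precisely~\eqref{Ass2}. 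Dropping the first term of the form, then pulling $f$ out of the transverse integral via $\essinf f>0$ and comparing $\sii(\Omega_0,\der s\,\der t)$ with $\Hilbert$ via $\esssup f<\infty$, I obtain $h[\psi]\geq c\,\|\psi\|_\Hilbert^2$ for some constant $c>0$, so that $\inf\sigma(H)\geq c>0$ by the Rayleigh--Ritz variational characterisation.

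Combining the two bounds gives $\inf\sigma(H)\in(0,E_1)$, and the first paragraph then places this point in $\sigma_\mathrm{disc}(H)\cap(0,E_1)$. There is no genuine obstacle in this argument; the only point requiring attention is the verification that the decay hypothesis~\eqref{vanish} really does restore the global metric bound~\eqref{global} that was deliberately not imposed in the general framework.
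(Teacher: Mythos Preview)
Your argument is correct and matches the paper's reasoning, which simply observes (in the paragraph preceding the corollary, without a formal proof environment) that combining Theorem~\ref{Thm.ess} with Theorem~\ref{Thm.bent} forces $\inf\sigma(H)$ to be a discrete eigenvalue below~$E_1$. Your additional verification that $\inf\sigma(H)>0$ is sound, though the detour through~\eqref{Ass2} is superfluous: you have already recorded the hypothesis $\Theta'=0$ of Theorem~\ref{Thm.bent}, so $|\Theta'|\in L^\infty(\R)$ holds trivially and the Jacobian reduces to $f(s,t)=1-t\,(k\cdot\Theta)(s)$, for which the two-sided bound follows directly from Assumption~\ref{Ass.Ass}.
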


This is a generalisation of the celebrate result~\cite{ES} 
about the existence of quantum bound states in curved planar quantum waveguides.

\section{Purely twisted strips}\label{Sec.twist}
%
In this section, we consider strips constructed in such a way 
that $k \cdot \Theta=0$, so that~$\Omega$ is unbent.
Recall that this hypothesis does not necessarily mean 
that~$\Gamma$ is a straight line
(for instance, the setting in Figure~\ref{Fig.helix}, right, is admissible).
We show that the twisting vector~$\Theta$ 
acts as a repulsive interaction in the sense that 
it induces Hardy-type inequalities whenever~$\Theta'$
is not identically equal to zero (but it is not too large).

\begin{theorem}\label{Thm.Hardy}
Suppose Assumption~\ref{Ass.Ass}. 
If $k \cdot \Theta = 0$ and $\Theta' \not= 0$ satisfies
\begin{equation}\label{open}
  a \, \| \dvtheta \|_{L^\infty(\R)} \leq \sqrt{2}
  \,,
\end{equation}
then there exists a positive constant~$c$ 
such that the inequality 
\begin{equation}\label{Hardy.global}
  H - E_1 \geq c \, \rho
\end{equation}
holds in the sense of quadratic forms in~$\Hilbert$,
where $\rho(s,t) \deff 1/(1 + s^2)$. 
\end{theorem}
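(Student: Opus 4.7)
The approach follows the ground-state substitution strategy developed for twisted waveguides in~\cite{EKK,K3}. Under the hypothesis $k\cdot\Theta=0$, the Jacobian~\eqref{Eq metric} reduces to $f(s,t)=\sqrt{1+t^2|\Theta'(s)|^2}$, which by~\eqref{open} satisfies $1\leq f\leq\sqrt{3}$. For $\psi\in C_0^\infty(\Omega_0)$, the first step is to set $v\deff\psi/\chi_1$, expand $|\partial_2\psi|^2 f = |\partial_2 v\,\chi_1+v\,\chi_1'|^2 f$, integrate by parts in~$t$ (the boundary terms vanish because $\chi_1(\pm a)=0$), and use $\chi_1''=-E_1\chi_1$. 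This yields the key identity
\begin{equation*}
h[\psi]-E_1\|\psi\|_{\Hilbert}^2
=\int_{\Omega_0}\frac{|\partial_1\psi|^2}{f}\,\der s\,\der t
+\int_{\Omega_0}\chi_1^2 f\,|\partial_2 v|^2\,\der s\,\der t
+\int_{\Omega_0}V\,|\psi|^2\,\der s\,\der t,
\end{equation*}
with the effective \emph{twist potential}
\begin{equation*}
V(s,t)\deff-\frac{\chi_1'(t)}{\chi_1(t)}\,\partial_2 f(s,t)
=\sqrt{E_1}\,t\tan\!\bigl(\sqrt{E_1}\,t\bigr)\,\frac{|\Theta'(s)|^2}{f(s,t)}.
\end{equation*}
Because $\sqrt{E_1}\,a=\pi/2$, the factor $t\tan(\sqrt{E_1}t)$ is non-negative on $(-a,a)$, so $V\geq 0$ and the right-hand side is a sum of three non-negative terms. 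In particular this recovers $H\geq E_1$, and the Hardy gain is stored entirely in~$V$, which is strictly positive wherever $\Theta'\neq 0$.

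\textbf{Reduction to a one-dimensional Hardy inequality.} Dropping the middle (non-negative) term and using $1\leq f\leq\sqrt{3}$, proving~\eqref{Hardy.global} reduces, up to an absolute constant, to
\begin{equation*}
\int_{\Omega_0}|\partial_1\psi|^2\,\der s\,\der t
+\int_{\Omega_0}V\,|\psi|^2\,\der s\,\der t
\geq C\int_{\Omega_0}\rho\,|\psi|^2\,\der s\,\der t.
\end{equation*}
The next step is the orthogonal decomposition $\psi(s,t)=\phi(s)\chi_1(t)+\eta(s,t)$ with $\int_{-a}^{a}\eta(s,t)\chi_1(t)\,\der t=0$ for every~$s$. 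For the transverse-orthogonal component~$\eta$, the spectral gap $E_2-E_1=3E_1$ of the Dirichlet Laplacian on $(-a,a)$, combined with $f\leq\sqrt 3$, gives $(E_2-\sqrt 3\,E_1)\int_{-a}^{a}|\eta|^2\,\der t \leq \int_{-a}^{a}(|\partial_2\eta|^2-E_1|\eta|^2)f\,\der t$, a strictly positive Poincar\'e-type gain that absorbs the $\rho$-contribution of~$\eta$ (since $\rho\leq 1$). For the principal component~$\phi\chi_1$, a direct computation (integration by parts in~$t$ using $(\chi_1^2)'=2\chi_1\chi_1'$ and $\partial_2(t/f)=1/f^3$) gives $\int_{-a}^{a}V\chi_1^2\,\der t \geq c_0|\Theta'(s)|^2$ with $c_0=1/(6\sqrt 3)$. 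One is then reduced to the classical one-dimensional Hardy inequality
\begin{equation*}
\int_{\R}|\phi'|^2\,\der s + c_0\int_{\R}|\Theta'(s)|^2\,|\phi|^2\,\der s
\geq C\int_{\R}\rho(s)\,|\phi|^2\,\der s,
\end{equation*}
which holds whenever $\Theta'\not\equiv 0$, by combining the half-line Hardy inequality on the complement of a fixed interval where $\Theta'$ is non-trivial with a Poincar\'e--Sobolev estimate on that interval.

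\textbf{Main obstacle.} The delicate part is handling the cross terms between $\phi\chi_1$ and~$\eta$: since $f$ depends on~$t$, the decomposition is orthogonal only in the unweighted transverse $L^2(-a,a)$ and \emph{not} in~$\Hilbert$, so mixed contributions appear on both sides of the target inequality through $\int|\partial_1\psi|^2/f$ and $\int\rho|\psi|^2 f$. These must be controlled by Cauchy--Schwarz and absorbed via an $\varepsilon$-inequality into the spectral-gap estimate for~$\eta$. The quantitative hypothesis $a\|\Theta'\|_{L^\infty(\R)}\leq\sqrt{2}$ (equivalent to $f\leq\sqrt{3}$) is precisely what keeps the deviation from $f\equiv 1$ small enough for this absorption to succeed with a universal constant, depending only on~$a$ and~$c_0$; passing from $C_0^\infty(\Omega_0)$ to $\Dom(h)$ by density concludes the proof.
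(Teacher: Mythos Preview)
Your ground-state identity with $v=\psi/\chi_1$ is correct, and the resulting potential $V=-(\chi_1'/\chi_1)\,\partial_2 f\geq 0$ is a nice observation (it is non-negative even without~\eqref{open}). But the step labelled ``Reduction to a one-dimensional Hardy inequality'' contains a genuine gap: the displayed inequality you reduce to after dropping the middle term is \emph{false}. Take $\Theta'$ compactly supported and test with $\psi(s,t)=\phi_R(s)\,g_\delta(t)$, where $g_\delta\in C_0^\infty(-a,a)$ is an $L^2$-normalised bump supported in $(-\delta,\delta)$ and $\phi_R$ is a plateau equal to~$1$ on $[-R,R]$ with $\|\phi_R'\|_{L^2}^2=O(1/R)$. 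Since $V(s,t)\lesssim E_1\,t^2|\Theta'(s)|^2$ for small~$|t|$, the left-hand side is $O(1/R)+O(\delta^2)$ while $\int\rho|\psi|^2$ stays bounded below; sending $\delta\to 0$ and $R\to\infty$ violates the inequality for every $C>0$. You then invoke the transverse spectral gap through $\int(|\partial_2\eta|^2-E_1|\eta|^2)f\,\mathrm{d}t$, but the only $\partial_2$-contribution in your identity is precisely the middle term $\int\chi_1^2 f\,|\partial_2 v|^2$ that you discarded (note $\partial_2 v=\partial_2(\eta/\chi_1)$ once $\phi$ is separated off), so the argument is inconsistent as written. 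Even if you retain that term, your potential $V$ blows up like $(a-|t|)^{-1}$ at the boundary, so absorbing the cross term $\int V|\eta|^2$ into the spectral-gap gain is not the routine $\varepsilon$-step you suggest.

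The paper avoids all of this by a different route. It uses the substitution $\phi=\sqrt{f}\,\psi$ (not $\psi/\chi_1$), producing the \emph{bounded} potential $\Theta'^2(2-t^2\Theta'^2)/(4f^4)$, whose non-negativity is exactly where~\eqref{open} enters. More importantly, rather than decomposing $\psi=\phi\chi_1+\eta$, the paper packages the entire transverse problem into the function
\[
\lambda(s)=\inf_{\psi\neq 0}\frac{\int_{-a}^a|\psi'|^2 f\,\mathrm{d}t}{\int_{-a}^a|\psi|^2 f\,\mathrm{d}t}-E_1\geq 0,
\]
which by definition already minimises over \emph{all} transverse profiles. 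This yields the pointwise-in-$s$ bound $h_1[\psi]\geq\int|\partial_1\psi|^2/f+\int\lambda\,|\psi|^2 f$ with no mode decomposition and no cross terms; the global Hardy inequality then follows by combining the classical one-dimensional Hardy inequality (applied for each fixed~$t$) with a Neumann eigenvalue estimate for $-\partial_s^2+\lambda$ on an interval where $\lambda>0$. What you gain with your $\chi_1$-substitution is a potential that is positive without~\eqref{open}; what the paper's $\sqrt{f}$-substitution buys is a bounded potential and a one-line passage from local to global that bypasses the decomposition entirely.
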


To prove the theorem, we follow the strategy of~\cite{K3}.
The main idea is to introduce a function $\lambda : \R \to \R$ by setting
\begin{equation}\label{lambda}
  \lambda (s) 
  \deff 
  \inf_{\stackrel[\psi \not= 0]{}{\psi \in W_0^{1,2}((-a,a))}}
  \frac{\displaystyle \int_{-a}^a |\psi'(t)|^2 \, f(s,t) \, \dd t}
  {\displaystyle \int_{-a}^a | \psi(t)|^2 \, f(s,t) \, \dd t} - E_1 
  \,.
\end{equation}
We keep the same notation for the function 
$ \lambda \otimes 1$ on $\R \times (-a,a)$.
Note that under the assumption $k\cdot\Theta=0$ of this section, 
the Jacobian~$f$ of the metric~\eqref{Eq metric} reduces to
\begin{equation}\label{metric.unbent}
  f(s,t) = \sqrt{1 + t^2 \, \Theta'(s)^2 } 
  \,.
\end{equation}
The following lemma is the crucial ingredient 
in the proof of Theorem~\ref{Thm.Hardy}.  

\begin{lemma}\label{Lem.lambda}
Under the assumptions of Theorem~\ref{Thm.Hardy}, 
$\lambda$ is a non-negative non-trivial function.
\end{lemma}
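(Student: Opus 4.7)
The plan is to reduce the claim to a pointwise analysis of the one-dimensional weighted eigenvalue problem hidden in the definition of $\lambda$, by performing the ground-state substitution with $\chi_1$. For each fixed $s \in \R$, abbreviate $g(t) \deff f(s,t) = \sqrt{1 + t^2 \Theta'(s)^2}$, so that $\lambda(s) + E_1$ is the infimum of the weighted Rayleigh quotient. For $\psi \in C_0^\infty((-a,a))$, I would substitute $\psi = \chi_1 \phi$ with $\phi \deff \psi/\chi_1$, integrate by parts using $-\chi_1'' = E_1 \chi_1$ together with $\chi_1(\pm a) = 0$, and cancel the resulting $(\chi_1')^2$ terms to arrive at the decomposition
\begin{equation*}
  \int_{-a}^{a} |\psi'(t)|^2 g(t) \, \der t - E_1 \int_{-a}^{a} |\psi(t)|^2 g(t) \, \der t
  = \int_{-a}^{a} \chi_1(t)^2 \, |\phi'(t)|^2 \, g(t) \, \der t
  - \int_{-a}^{a} |\phi(t)|^2 \, \chi_1(t) \, \chi_1'(t) \, g'(t) \, \der t.
\end{equation*}
A direct computation yields $\chi_1(t) \chi_1'(t) = -\frac{\pi}{4a^2}\sin(\pi t/a)$ and $g'(t) = t \, \Theta'(s)^2 / g(t)$, so that
$$
  \chi_1(t) \chi_1'(t) g'(t)
  = -\frac{\pi \, \Theta'(s)^2}{4 a^2} \, \frac{t \sin(\pi t/a)}{g(t)}
  \leq 0
  \qquad \mbox{on } (-a,a),
$$
since $t \sin(\pi t/a) \geq 0$ on the whole interval. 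Both terms on the right-hand side of the decomposition are therefore non-negative, giving $\lambda(s) \geq 0$ for every $s \in \R$.

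For non-triviality, pick $s_0 \in \R$ with $\Theta'(s_0) \neq 0$, which exists by hypothesis. At this $s_0$, the weight $-\chi_1 \chi_1' g'$ is strictly positive on $(-a,a) \setminus \{0\}$. Since $g$ is bounded on $(-a,a)$ and satisfies $g \geq 1$, the weighted Sturm--Liouville operator $-g^{-1} \partial_t (g \partial_t)$ on $L^2((-a,a), g\,\der t)$ with Dirichlet conditions has compact resolvent, so the infimum defining $\lambda(s_0) + E_1$ is attained by some non-zero ground state $\psi_0 \in W_0^{1,2}((-a,a))$. If $\lambda(s_0)$ were zero, the decomposition applied to $\psi_0$ would force $\phi_0 \deff \psi_0/\chi_1$ to be simultaneously constant on $(-a,a)$ (from the vanishing of the first integral) and identically zero on $(-a,a) \setminus \{0\}$ (from the strict positivity of the second weight), hence $\psi_0 \equiv 0$, contradicting the choice of $\psi_0$. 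Therefore $\lambda(s_0) > 0$, and $\lambda$ is non-trivial.

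The main technical obstacle I anticipate is extending the key identity from $C_0^\infty((-a,a))$ to all of $W_0^{1,2}((-a,a))$, since the quotient $\phi = \psi/\chi_1$ can blow up like $1/\sqrt{a-|t|}$ near the endpoints for a generic $\psi \in W_0^{1,2}$. The saving feature is that $-\chi_1 \chi_1' g'$ vanishes linearly there, so the combination $|\phi|^2 \chi_1 \chi_1' g'$ is controlled by a constant multiple of $|\psi|^2/(a-|t|)$, which is integrable by the elementary Hardy bound $|\psi(t)|^2 \leq (a-|t|) \, \|\psi'\|_{L^2}^2$. A cut-off approximation argument on shrinking sub-intervals $(-a+\eps, a-\eps)$ then suffices to absorb all boundary contributions in the limit $\eps \to 0$ and justify the identity on the full form domain.
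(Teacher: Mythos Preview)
Your approach is correct and genuinely different from the paper's. The paper applies the substitution $\phi = \sqrt{f}\,\psi$, reducing~\eqref{lambda} to an unweighted Schr\"odinger problem on $(-a,a)$ with potential $V(s,t) = |\Theta'(s)|^2\big(2 - t^2|\Theta'(s)|^2\big)\big/\big(4f(s,t)^4\big)$, and then invokes the smallness hypothesis~\eqref{open} to force $V \geq 0$. Your ground-state substitution $\psi = \chi_1\phi$ instead produces two manifestly non-negative terms via the elementary observation $t\sin(\pi t/a) \geq 0$ on $(-a,a)$, and never uses~\eqref{open} at all. Since that hypothesis enters the proof of Theorem~\ref{Thm.Hardy} only through this lemma, your argument in fact removes the smallness condition from the theorem as well---which bears directly on the open problem posed immediately afterwards. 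On the boundary obstacle you flag: for non-negativity it suffices to establish the identity on $C_0^\infty((-a,a))$ and pass to the infimum by density, sidestepping the issue entirely; for non-triviality, the minimiser $\psi_0$ is a $C^2$ eigenfunction of a regular Sturm--Liouville problem with $\psi_0'(\pm a) \neq 0$, so $\phi_0 = \psi_0/\chi_1$ extends to $C^1([-a,a])$ by L'H\^opital and your identity applies to it directly, avoiding the cut-off argument.
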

\begin{proof}
Fix any $s \in \R$.
Employing the change of the test function 
$\phi \deff \sqrt{f} \, \psi$ and by integrating by parts, 
we obtain
\begin{equation}\label{lambda.bis}
  \lambda(s) = 
  \inf_{\stackrel[\phi \not= 0]{}{\phi \in W_0^{1,2}((-a,a))}}
  \frac{\displaystyle \int_{-a}^a \left( |\phi'(t)|^2 
  - E_1 \, | \phi (t)| ^2 
  + V(s,t) \, |\phi (t) |^2 \right) \dd t}
  {\displaystyle \int_{-a}^a | \phi(t)|^2 \, \dd t}
\end{equation}
with
\begin{equation*}
  V(s,t) \deff 
  \frac{\Theta'(s)^2 \left( 2 - t^2 \, \Theta'(s)^2 \right) }
  {4 \, f(s,t)^4}
  \,.
\end{equation*}
We note that~$\lambda(s)$ is the spectral threshold of the self-adjoint
operator~$L$ in $\sii((-a,a))$ associated with the closed form
$$
  l[\phi]\deff\int_{-a}^a \left( |\phi'(t)|^2 
  - E_1 \, | \phi (t)| ^2 
  + V(s,t) \, |\phi (t) |^2 \right) \dd t
  \,,
  \qquad
  \Dom(l) \deff W_0^{1,2}((-a,a))
  \,.
$$
Since the resolvent of~$L$ is compact, 
$\lambda(s)$ is the lowest eigenvalue of~$L$.
Let us denote by~$\phi_1$ a corresponding eigenfunction.
By standard arguments (\cf~\cite[Thm.~8.38]{Gilbarg-Trudinger}),
the eigenvalue is simple and~$\phi_1$ can be chosen to be positive. 
The infimum in~\eqref{lambda.bis} is clearly achieved by~$\phi_1$.
Due to the hypothesis~\eqref{open}, the function~$V$ is non-negative. 
At the same time, one has the Poincar\'e inequality 
\begin{equation}\label{Poincare}
  \forall \phi \in W_0^{1,2}((-a,a)) 
  \,, \qquad
  \int_{-a}^a |\phi'(t)|^2 \, \der t
  \geq E_1 \int_{-a}^a |\phi(t)|^2 \, \der t
  \,.
\end{equation}
Consequently, $\lambda(s)$ is clearly non-negative.
Now, assume that $\lambda(s)=0$. 
Then necessarily $V(s,t)=0$ for every $t\in(-a,a)$,
which implies $\Theta'(s)=0$.
Since $\Theta'$ is supposed not to be identically equal to zero,
we necessarily have $\lambda\not=0$ as well.
\end{proof}

Using just the definition~\eqref{lambda} in~\eqref{form},
we immediately get the inequality
\begin{equation}\label{Hardy.local}
  H - E_1 \geq \lambda
  \,.
\end{equation}
By Lemma~\ref{Lem.lambda}, it is a Hardy-type inequality
whenever the assumptions of Theorem \ref{Thm.Hardy} hold true.
We call it a \emph{local} Hardy inequality because 
the defect of~\eqref{Hardy.local} is that the right-hand side 
might not be positive everywhere in~$\Omega_0$
(like, for instance, if~$\Theta'$ is compactly supported).
To transfer it into the \emph{global} Hardy inequality~\eqref{Hardy.global} 
of Theorem~\ref{Thm.Hardy},
we use the longitudinal kinetic energy that we have neglected 
when deriving~\eqref{Hardy.local}. 

\begin{proof}[Proof of Theorem~\ref{Thm.Hardy}]
Let $\psi \in C_0^\infty(\Omega_0)$. 
Under the hypotheses of the theorem, 
it follows from Lemma~\ref{Lem.lambda} 
that~$\lambda$ is non-negative and non-trivial. 
Let us fix any bounded open interval $I \subset \R$
on which~$\lambda$ is non-trivial.   
Let us abbreviate $\Omega_0^I \deff I \times (-a,a)$
and $\Hilbert_I \deff \sii(\Omega_0^I,f(s,t)\,\der s \, \der t)$.
We shall widely use the bounds
$$
  1 \leq f(s,t) \leq \sqrt{1+a^2 \|\Theta'\|_{L^\infty(I)}^2} \deffin C
$$
valid for almost every $(s,t) \in \Omega_0^I$.
Recall the definition of the shifted form~$h_1$ given in~\eqref{h.shifted}.
By using the definition~\eqref{lambda} in~\eqref{form}, we get 
\begin{equation}\label{writing0}
\begin{aligned}
  h_1[\psi]  
  &\geq
  \int_{\Omega_0} \frac{|\partial_1\psi(s,t)|^2}{f(s,t)} \, \der s \, \der t
  + \int_{\Omega_0} \lambda(s) \, |\psi(s,t)|^2\, f(s,t) \, \der s \, \der t
  \\
  &\geq 
  \int_{\Omega_0^I} \frac{|\partial_1\psi(s,t)|^2}{f(s,t)} \, \der s \, \der t
  + \int_{\Omega_0^I} \lambda(s) \, |\psi(s,t)|^2\, f(s,t) \, \der s \, \der t
  \\
  &\geq 
  C^{-1}
  \int_{\Omega_0^I} 
  \left( |\partial_1\psi(s,t)|^2 + \lambda(s) \, |\psi(s,t)|^2 \right)
  \der s \, \der t
  \\
  &\geq 
  C^{-1} \lambda_0 
  \int_{\Omega_0^I} |\psi(s,t)|^2 \, \der s \, \der t
  \\
  &\geq 
  C^{-2}\lambda_0 
  \, \|\psi\|_{\Hilbert_I}^2 
  \,,
\end{aligned}
\end{equation}
where $\lambda_0 > 0$ is the lowest eigenvalue 
of the Schr\"odinger operator $-\partial_s^2 + \lambda(s)$ in $\sii(I)$,
subject to Neumann boundary conditions.

Let us denote by~$s_0$ the middle point of~$I$. 
Let $\eta \in C^\infty(\R)$ be such that $0 \leq \eta \leq 1$,
$\eta=0$ in a neighbourhood of~$s_0$ and $\eta=1$ outside~$I$. 
Let us denote by the same symbol~$\eta$ the function 
$\eta \otimes 1$ on $\R \times (-a,a)$,
and similarly for its derivative~$\eta'$. 
Writing $\psi = \eta\psi + (1-\eta)\psi$, 
we have
\begin{equation}\label{writing} 
\begin{aligned}
  \int_{\Omega_0} 
  \frac{|\psi(s,t)|^2}{1+(s-s_0)^2} \, \der s \, \der t
  &\leq 2 \int_{\Omega_0} 
  \frac{|(\eta\psi)(s,t)|^2}{(s-s_0)^2} \, \der s \, \der t 
  + 2 \int_{\Omega_0} 
  |((1-\eta)\psi)(s,t)|^2\, \der s \, \der t 
  \\
  &\leq 8 \int_{\Omega_0} 
  |\partial_1(\eta\psi)(s,t)|^2  \, \der s \, \der t 
  + 2 \int_{\Omega_0^I} 
  |\psi(s,t)|^2\, \der s \, \der t 
  \\
  &\leq 16 \int_{\Omega_0} 
  |\partial_1\psi(s,t)|^2  \, \der s \, \der t 
  + (16 \, \|\eta'\|_{L^\infty(\R)}^2 + 2) \int_{\Omega_0^I} 
  |\psi(s,t)|^2\, \der s \, \der t 
  \\
  &\leq 16 \, C
  \int_{\Omega_0} 
  \frac{|\partial_1\psi(s,t)|^2}{f(s,t)}  \, \der s \, \der t 
  + (16 \, \|\eta'\|_{L^\infty(\R)}^2 + 2) \, \|\psi\|_{\Hilbert_I}^2  
  \\
  &\leq 16 \, C
  \ h_1[\psi]  
  + (16 \, \|\eta'\|_{L^\infty(\R)}^2 + 2) \, \|\psi\|_{\Hilbert_I}^2 
  \,.
\end{aligned}
\end{equation}
Here the second estimate follows from the classical Hardy inequality 
$$
  \forall \varphi \in W_0^{1,2}(\R\setminus\{0\})
  \,, \qquad
  \int_{\R} |\varphi'(x) |^2 \, \dd x
  \geq \frac{1}{4}
  \int_{\R} \frac{|\varphi(x)|^2}{x^2} \, \dd x
  \,,
$$
and the last inequality employs~\eqref{lambda} and Lemma~\ref{Lem.lambda}. 
Denoting $K \deff 16 \, \|\eta'\|_{L^\infty(\R)}^2 + 2$
and interpolating between~\eqref{writing0} and~\eqref{writing},
we get
$$
\begin{aligned}
  h_1[\psi] 
  &\geq 
  \delta \, (16\, C)^{-1} 
  \int_{\Omega_0} \frac{|\psi(s,t)|^2}{1+(s-s_0)^2} \, \der s \, \der t
  + [(1-\delta)\, C^{-2}\lambda_0 - \delta \, (16 \,C)^{-1} K] \, \|\psi\|_{\Hilbert_I}^2 
  \\
  &= \frac{\lambda_0}{C(16\,\lambda_0+C K)}
  \int_{\Omega_0} \frac{|\psi(s,t)|^2}{1+(s-s_0)^2} \, \der s \, \der t
  \\
  &\geq 
  \frac{\lambda_0}{C^2(16\,\lambda_0+C K)}
  \int_{\Omega_0} \frac{|\psi(s,t)|^2}{1+(s-s_0)^2} \, f(s,t) \, \der s \, \der t
  \\
  &\geq 
  \frac{\lambda_0}{C^2(16\,\lambda_0+C K)}
  \left(\inf_{s\in\R} \frac{1+s^2}{1+(s-s_0)^2}\right)
  \|\rho^{1/2}\psi\|_\Hilbert^2
  \,.
\end{aligned}
$$
Here the first inequality holds with any $\delta \in \R$
and the equality is due to the choice for which the square bracket vanishes.
The theorem is proved with a constant
$$
  c \geq \frac{\lambda_0}{C^2(16\,\lambda_0+C K)}
  \left(\inf_{s\in\R} \frac{1+s^2}{1+(s-s_0)^2}\right)
  \,,
$$
where the right-hand side depends on the half-width~$a$
and properties of the function~$|\Theta'|$.
\end{proof}

As an immediate consequence of Theorems~\ref{Thm.ess} and~\ref{Thm.Hardy},
we get the following stability result.

\begin{corollary}\label{Corol.Hardy} 
In addition to the hypotheses of Theorem~\ref{Thm.Hardy},
let us assume~\eqref{vanish}. Then 
\begin{equation*}
  \sigma(H) = \sigma_\mathrm{ess}(H) = [E_1,\infty)
  \,.
\end{equation*}
\end{corollary}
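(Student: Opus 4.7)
The plan is to combine the two main results already established, Theorem~\ref{Thm.ess} and Theorem~\ref{Thm.Hardy}, each contributing one of the two inclusions needed for the equality $\sigma(H)=[E_1,\infty)$.

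First I would verify that the hypotheses of Theorem~\ref{Thm.ess} are satisfied. Under the assumptions of Theorem~\ref{Thm.Hardy}, the identity $k\cdot\Theta=0$ holds everywhere, so the first limit in~\eqref{vanish} is trivially satisfied; the added assumption~\eqref{vanish} then supplies the remaining requirement $|\Theta'(s)|\to 0$ as $|s|\to\infty$. Consequently, Theorem~\ref{Thm.ess} applies and yields
\begin{equation*}
  \sigma_{\mathrm{ess}}(H) = [E_1,\infty).
\end{equation*}
In particular, $[E_1,\infty)\subset\sigma(H)$.

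For the reverse inclusion $\sigma(H)\subset[E_1,\infty)$, I would invoke Theorem~\ref{Thm.Hardy} directly. Since $\rho(s,t)=1/(1+s^2)$ is strictly positive, the global Hardy inequality~\eqref{Hardy.global} implies, a fortiori,
\begin{equation*}
  H - E_1 \geq 0
\end{equation*}
as a quadratic form in~$\Hilbert$. By the variational (Rayleigh--Ritz) characterisation of the infimum of the spectrum of a self-adjoint operator bounded below, this forces $\inf\sigma(H)\geq E_1$, hence $\sigma(H)\subset[E_1,\infty)$.

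Combining the two inclusions yields $\sigma(H)=\sigma_{\mathrm{ess}}(H)=[E_1,\infty)$, which is the claim. Since this argument is essentially two lines of bookkeeping once the main theorems are in place, there is no genuine obstacle; the only substantive observation is that the positivity of $\rho$ in~\eqref{Hardy.global} is needed merely in its weaker form $\rho\geq 0$ to deduce $H\geq E_1$, while the stronger pointwise positivity of $\rho$ is what would make the inequality useful for finer results (such as the absence of eigenvalues at the threshold or improved decay estimates), which lie beyond the scope of this corollary.
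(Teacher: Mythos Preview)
Your argument is correct and follows exactly the route the paper intends: the corollary is stated as an immediate consequence of Theorems~\ref{Thm.ess} and~\ref{Thm.Hardy}, and your two-step combination (essential spectrum from Theorem~\ref{Thm.ess}, lower bound $H\geq E_1$ from the Hardy inequality with $\rho\geq 0$) is precisely that. No gaps.
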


However, if~$\Theta'$ does not vanish at the infinity of the strip~$\Omega_0$,
there are situations where the right-hand side of~\eqref{Hardy.global} 
(represented by a positive function vanishing at infinity)
can be replaced by a positive constant (\cf~\cite{KTdA2}). 
In other words, the repulsive effect of twisting 
is so strong that the Hardy inequality turns to a Poincar\'e inequality 
and even the threshold of the essential spectrum grows up.

An obvious application of Theorem~\ref{Thm.Hardy} is the stability
of the spectrum against attractive additive perturbations.
Indeed, in addition to the hypotheses of Theorem~\ref{Thm.Hardy},
let us assume that~$\Theta'$ vanishes at infinity
in the sense of~\eqref{vanish}.
Then, given any bounded function of compact support $V:\Omega_0\to\R$, 
there exists a positive number~$\eps_0$ such that
$\sigma(H+\eps V)=\sigma(H)=[E_1,\infty)$ for every $|\eps| \leq \eps_0$.
Of course, the compact support can be replaced by a fast decay
at infinity comparable to the asymptotic behaviour 
of the Hardy weight~$\rho$.
It is less obvious that the same stability property holds 
against higher-order perturbations, too. 	
As an example, we establish the stability result for 
the purely geometric perturbation of bending.
\begin{theorem}\label{Thm.stability}
Suppose Assumption~\ref{Ass.Ass}. 
Let~$k$ and~$\Theta$ be such that $\Theta' \not= 0$, \eqref{open}~holds 
and the inequality
\begin{equation*}
  |(k \cdot \Theta)(s)| \leq \frac{\eps}{1+s^2}
\end{equation*}
is valid with some non-negative number~$\eps$.
Then  
\begin{equation*}
  H \geq E_1 
\end{equation*}
for all sufficiently small~$\eps$. 
If in addition~\eqref{vanish} holds, then 
\begin{equation*}
  \sigma(H) = \sigma_\mathrm{ess}(H) = [E_1,\infty)
  \,.
\end{equation*}
\end{theorem}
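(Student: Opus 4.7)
The plan is to treat the bending as a small perturbation of the purely twisted setting and to absorb it using the Hardy inequality of Theorem~\ref{Thm.Hardy}. Let $h^t$ denote the quadratic form obtained from~\eqref{form} by formally replacing $k\cdot\Theta$ by zero while keeping $\Theta'$ unchanged; the associated Jacobian is $f^t(s,t)\deff\sqrt{1+t^2|\Theta'(s)|^2}$, and the corresponding Hilbert space is $\Hilbert^t \deff L^2(\Omega_0,f^t(s,t)\,\der s\,\der t)$. Theorem~\ref{Thm.Hardy} applied to this purely twisted setting supplies a positive constant~$c$ such that
\begin{equation*}
h^t[\psi] - E_1\,\|\psi\|_{\Hilbert^t}^2 \ge c \int_{\Omega_0} |\psi(s,t)|^2\,\rho(s)\,f^t(s,t)\,\der s\,\der t\,.
\end{equation*}

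The pointwise identity $f^2-(f^t)^2 = -2t\,(k\cdot\Theta) + t^2\,(k\cdot\Theta)^2$, together with the uniform lower bound~\eqref{Ass.bound} and the global upper bound $f^t\le \sqrt{3}$ coming from~\eqref{open}, yields
\begin{equation*}
|f-f^t| + |f^{-1}-(f^t)^{-1}| \le C\,|k\cdot\Theta| \le C\,\eps\,\rho\,,
\end{equation*}
with $C$ depending only on $a$ and $\|\Theta'\|_{L^\infty(\R)}$. So the full metric is a pointwise perturbation of the unbent one controlled by $\eps\rho$.

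Next I would adapt the proof of Theorem~\ref{Thm.Hardy} directly to the bent setting. Define the effective transverse eigenvalue $\tilde\lambda(s)$ by~\eqref{lambda} with $f$ in place of $f^t$; the change of test function $\psi = \phi/\sqrt{f}$ (as in Lemma~\ref{Lem.lambda}) converts the infimum into a one-dimensional spectral problem whose effective potential differs from that appearing in~\eqref{lambda.bis} by a quantity pointwise bounded by $C\,|k\cdot\Theta|$ uniformly in $t\in(-a,a)$. First-order perturbation of the lowest eigenvalue then gives $\tilde\lambda \ge \lambda - C\,\eps\,\rho$, with $\lambda\ge 0$ nonnegative and non-trivial by Lemma~\ref{Lem.lambda}. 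Using the local bound $h[\psi]-E_1\|\psi\|_\Hilbert^2 \ge \int(|\partial_1\psi|^2/f + \tilde\lambda\,|\psi|^2 f)\,\der s\,\der t$ and rerunning the interpolation scheme of the proof of Theorem~\ref{Thm.Hardy} with $f$ throughout (the bounds on $f$ remaining admissible for small~$\eps$) yields
\begin{equation*}
h[\psi] - E_1\,\|\psi\|_\Hilbert^2 \ge (c - C\,\eps)\int_{\Omega_0}|\psi|^2\,\rho\,f\,\der s\,\der t\,,
\end{equation*}
which is nonnegative for $\eps$ sufficiently small. This establishes $H\ge E_1$.

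The hardest step I anticipate is verifying the eigenvalue perturbation cleanly: one must check that the transverse effective potential $(\partial_t^2\sqrt{f})/\sqrt{f}$ differs from its unbent counterpart by a quantity pointwise bounded by $C\,|k\cdot\Theta|$ uniformly in $t$. This is a direct but tedious computation using the explicit form of $f$ and the bounds on $\Theta'$ supplied by~\eqref{open}. Once $H\ge E_1$ is established, the second assertion is immediate: the additional hypothesis~\eqref{vanish} (whose first limit is automatic from $|k\cdot\Theta|\le\eps\rho$) renders Theorem~\ref{Thm.ess} applicable, giving $\espec(H)=[E_1,\infty)$, and combined with $H\ge E_1$ we conclude $\sigma(H)=\espec(H)=[E_1,\infty)$.
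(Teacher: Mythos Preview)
Your overall strategy---treat the bending as a small perturbation of the purely twisted configuration and absorb it using the Hardy inequality of Theorem~\ref{Thm.Hardy}---is exactly the paper's idea. The execution, however, diverges. The paper never perturbs the transverse eigenvalue and never reruns the interpolation scheme: it simply bounds the ratio $f/f_0$ pointwise by $1\pm C\eps\varrho$, observes that the transverse contribution $\int(|\partial_2\psi|^2-E_1|\psi|^2)f_0\,\der t$ is already nonnegative by Lemma~\ref{Lem.lambda} (so multiplying it by $1-C\eps\varrho$ only helps), and then invokes Theorem~\ref{Thm.Hardy} as a finished black box applied to the unbent form~$h_0$. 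The residual error is a single term of size $C\eps\int\varrho|\psi|^2 f_0$, which is swallowed by the Hardy weight~$c\rho$ for small~$\eps$.

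Your route via $\tilde\lambda$ is workable but costlier. The ``tedious computation'' you flag---controlling the change in the effective transverse potential $(\partial_t^2\sqrt{f})/\sqrt{f}$---is entirely avoided in the paper's argument, since no transverse spectral problem is ever re-examined. Moreover, rerunning the interpolation with~$\tilde\lambda$ needs an extra bit of care you did not mention: the step~\eqref{writing0} in the proof of Theorem~\ref{Thm.Hardy} discards the integral outside the bounded interval~$I$ using $\lambda\ge0$, and this fails verbatim once $\tilde\lambda$ is allowed to dip below zero; you would have to carry the error $-C\eps\int\rho|\psi|^2 f$ through the interpolation and absorb it at the end. This can be done, but the paper's direct form comparison sidesteps all of it. In short: same idea, but the paper applies its own Hardy theorem as a lemma rather than reproving it, which is both shorter and avoids the computation you anticipated as hardest.
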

\begin{proof}
The proof is based on the comparison of 
the Jacobian of the full metric~\eqref{Eq metric} 
and the Jacobian without bending~\eqref{metric.unbent}.
Let us keep the notation~$f$ for the former 
and write~$f_0$ for the latter. 
Let us denote $C\deff a\,(2+a\,\|\kappa\|_{L^\infty(\R)})$
and $\varrho(s)\deff1/(1+s^2)$ and assume that $\eps < C^{-1}$. 
Then we have
$$
  1-C\eps\varrho(s)
  \leq \frac{f(s,t)}{f_0(s,t)} \leq
  1+C\eps\varrho(s)
  \,,
$$
for almost every $(s,t) \in \Omega_0$.
In the same manner, let us keep the notation~$h$ and~$\Hilbert$
respectively for the form~\eqref{form} and the Hilbert space~\eqref{Hilbert}
corresponding to~$f$ and let us write~$h_0$ and~$\Hilbert_0$
for the analogous quantities corresponding to~$f_0$.
Let $\psi \in C_0^\infty(\Omega_0)$,
a dense subspace of both $\Dom(h)$ and $\Dom(h_0)$.
Using the estimates on the Jacobians above, we get
$$
\begin{aligned}
  h[\psi] - E_1 \|\psi\|_\Hilbert^2
  \geq \ & 
  \frac{1}{1-C\eps}
  \int_{\Omega_0} \frac{|\partial_1\psi(s,t)|^2}{f_0(s,t)} \, \der s \, \der t
  \\
  & + \int_{\Omega_0} (1-C\eps\varrho(s)) 
  \left(|\partial_2\psi(s,t)|^2 - E_1 |\psi(s,t)|^2 \right)
  f_0(s,t) 
  \, \der s \, \der t
  \\
  & + E_1 \int_{\Omega_0} 2\,C\eps\varrho(s) \, |\psi(s,t)|^2\, f_0(s,t) 
  \, \der s \, \der t
  \,.
\end{aligned}
$$   
Since the integrand on the second line is non-negative 
due to~\eqref{lambda} and Lemma~\ref{Lem.lambda},
we get the estimate
$$
  h[\psi] - E_1 \|\psi\|_\Hilbert^2
  \geq (1-C\eps) \left(h_0[\psi] - E_1 \|\psi\|_{\Hilbert_0}^2\right)
  + E_1 \int_{\Omega_0} 2\,C\eps\varrho(s) \, |\psi(s,t)|^2\, f_0(s,t) 
  \, \der s \, \der t 
  \,.
$$   
Applying the Hardy inequality of Theorem~\ref{Thm.Hardy}, 
we conclude with
$$
  h[\psi] - E_1 \|\psi\|_\Hilbert^2 \geq
  \int_{\Omega_0} (c-E_1 2\,C\eps) \, \varrho(s) \, |\psi(s,t)|^2\, f_0(s,t) 
  \, \der s \, \der t 
  \,.
$$
If $\eps \leq c/(E_1 2\,C)$, it follows that $H \geq E_1$.
In fact, we have established the Hardy inequality 
$$
  H - E_1 \geq (c-E_1 2\,C\eps) \, \rho \, \frac{f_0}{f}
$$
if the strict inequality $\eps < c/(E_1 2\,C)$ holds.
Assuming now~\eqref{vanish}, the fact that all energies $[E_1,\infty)$
belong to the spectrum of~$H$ follows by Theorem~\ref{Thm.ess}.  
\end{proof}
\begin{open} 
Is the smallness condition~\eqref{open}
necessary for the existence of the Hardy inequality?
\end{open}
\begin{open}
It follows from Corollary~\ref{Corol.Hardy} 
that~$H$ possesses no discrete eigenvalues. 
Is it true that, under the hypotheses of Corollary~\ref{Corol.Hardy}, 
there are no (embedded) eigenvalues inside the interval $[E_1,\infty)$ either?
On the other hand, it has been recently observed 
in~\cite{Bruneau-Miranda-Popoff_2018,Bruneau-Miranda-Parra-Popoff} 
that a local twist of a solid waveguide 
leads to the appearance of resonances around 
the thresholds given by the eigenvalues of the cross-section.  
Does this phenomenon occurs in the twisted strips as well?
\end{open}
\begin{open} 
Solid tubes with asymptotically diverging twisting
represent a new class of models which lead to previously 
unobserved phenomena like the annihilation of the essential spectrum~\cite{K11}
and establishing a non-standard Weyl's law for the accumulation 
of eigenvalues at infinity remains open
(a first step in this direction has been recently taken 
in \cite{Barseghyan-Khrabustovskyi_2018} by establishing 
a Berezin-type upper bound for the eigenvalue moments).
The case of twisted strips with $|\Theta'(s)|\to\infty$ as $|s| \to \infty$
is rather different for some essential spectrum is always present, 
but related questions about the accumulation of discrete eigenvalues
remain open, too (\cf~\cite{KTdA2}).
\end{open}
%

\section{Thin strips}\label{Sec.thin}
%
In this last section, we consider simultaneously bent and twisted strips
in the limit when the half-width~$a$ tends to zero.
Since we consider Dirichlet boundary conditions,
it is easily seen that $\inf\sigma(H) \to \infty$ as $a \to 0$.
However, a non-trivial limit is obtained 
for the ``renormalised'' operator $H-E_1$. 
Roughly, we shall establish the limit
\begin{equation}\label{formal}
  H-E_1 
  \ \xrightarrow[a \to 0]{} \ 
  H_\mathrm{eff} 
  \deff -\frac{\der^2}{\der s^2} + V_\mathrm{eff}(s)
  \,,
\end{equation}
where~$H_\mathrm{eff}$ is an operator in~$\sii(\R)$  
and the geometric potential~$V_\mathrm{eff}$ provides a valuable insight
into the opposite effects of bending and twisting:
$$
  V_\mathrm{eff}  
  \deff - \frac{1}{4} \, (k\cdot\Theta)^2 + \frac{1}{2} \, |\Theta'|^2
  \,. 
$$
That is, the geodesic curvature of~$\Gamma$ 
as a curve on~$\Omega$ realises 
an attractive part of the potential, 
while the Gauss curvature of the ambient surface~$\Omega$ 
acts as a repulsive interaction.
Since the operators~$H$ and~$H_\mathrm{eff}$ are unbounded
and, moreover, they act in different Hilbert spaces, 
it is necessary to properly interpret the formal limit~\eqref{formal}.

We start by transferring~$H$ into a unitarily operator 
in the $a$-independent Hilbert space $\sii(\Pi)$ 
with $\Pi \deff \R \times (-1,1)$. 
This is achieved by the unitary transform 
$
  U:\Hilbert \to \sii(\Pi)
$
defined by
$$
  (U\psi)(s,u) \deff \sqrt{a \, f(s,a u)} \, \psi(s,a u)
  \,.
$$
We shall write $f_a(s,u) \deff f(s,a u)$.
The unitarily equivalent operator $\hat{H} \deff UHU^{-1}$ in $\sii(\Pi)$
is the operator associated with the quadratic form
$\hat{h}[\phi] \deff h[U^{-1}\phi]$, $\Dom(\hat{h}) \deff U \Dom(h)$.  
It will be convenient to strengthen Assumption~\ref{Ass.Ass}.

\begin{center}
\fbox{%
\begin{minipage}{0.98\textwidth}
\begin{Assumption}\label{Ass.thin}
Let $\Gamma \in C^{2,1}(\R;\R^{n+1})$ 
and $\Theta \in C^{1,1}(\R;\R^{n})$.
Suppose~\eqref{Eq sum of theta 2},
$
  a \, \| k\cdot\Theta \|_{L^\infty(\R)} < 1  
$
and  
\begin{equation*}
  k\cdot\Theta, \, (k\cdot\Theta)', \, |\Theta'|, \, |\Theta''| \in L^\infty(\R)
  \,.
\end{equation*}
\end{Assumption}
\end{minipage}%
}
\end{center}

The inequality of the assumption does not need to
be explicitly assumed, for it will be always satisfied 
for all sufficiently small~$a$.
Again, neither the curvature~$\kappa$ nor its derivative~$\kappa'$
are assumed to be (globally) bounded, \cf~Remark~\ref{Rem.spiral}. 

Since Assumption~\ref{Ass.thin} particularly involves~\eqref{Ass2},
we have the global bounds~\eqref{global} to the Jacobian~$f$,
and consequently $\Dom(h) = W_0^{1,2}(\Omega_0)$.
Given any $\phi \in C_0^\infty(\Pi)$,
the H\"older continuity hypotheses of Assumption~\ref{Ass.thin} 
ensure that $U^{-1} \phi \in \Dom(h)$.
A straightforward computation yields
$$
  \hat{h} = \hat{h}_1 + \hat{h}_2
$$
with
$$
\begin{aligned}
  \hat{h}_1[\phi] 
  &\deff 
  \int_{\Pi} \frac{|\partial_1\phi|^2}{f_a^2} \, \der s \, \der u
  + \frac{1}{4} \int_{\Pi} 
  \frac{(\partial_1 f_a)^2}{f_a^4} \, 
  |\phi|^2 \, \der s \, \der u
  - \Re \int_{\Pi} 
  \frac{\partial_1 f_a}{f_a^2} \, 
  \overline{\phi} \, \partial_1\phi \, \der s \, \der u
  \,,
  \\
  \hat{h}_2[\phi] 
  &\deff 
  \frac{1}{a^2} 
  \int_{\Pi} |\partial_2\phi|^2 \, \der s \, \der u
  + \frac{1}{4 a^2} \int_{\Pi} 
  \frac{(\partial_2 f_a)^2}{f_a^2} \, 
  |\phi|^2 \, \der s \, \der u
  - \frac{1}{a^2} \, \Re \int_{\Pi} 
  \frac{\partial_2 f_a}{f_a} \, 
  \overline{\phi} \, \partial_2\phi \, \der s \, \der u 
  \,,
\end{aligned}
$$
where we suppress the arguments $(s,u)$ 
of the integrated functions for brevity.
Integrating by parts in the second form, 
we further get
$$
  \hat{h}_2[\phi] 
  = \frac{1}{a^2} 
  \int_{\Pi} |\partial_2\phi|^2 \, \der s \, \der u
  +  \int_{\Pi} V_a \, |\phi|^2 \, \der s \, \der u
$$
with
$$
  V_a \deff -\frac{1}{4 a^2} \frac{(\partial_2 f_a)^2}{f_a^2}
  + \frac{1}{2 a^2} \frac{\partial_2^2 f_a}{f_a}
  \,.
$$
Using~\eqref{global} together with
the uniform boundedness hypotheses of Assumption~\ref{Ass.thin},
is is easy to verify that 
$$
  \Dom(\hat{h}) = W_0^{1,2}(\Pi)
  \,.
$$

Using Assumption~\ref{Ass.thin}, one has the estimates
\begin{equation}\label{uniform}
  \|f_a - 1\|_{L^\infty(\Pi)} \leq C \, a
  \,, \qquad
  \|\partial_1 f_a\|_{L^\infty(\Pi)} \leq C \, a 
  \,, \qquad
  \|V_a - V_\mathrm{eff}\|_{L^\infty(\Pi)} \leq C \, a
  \,,
\end{equation}
where~$C$ is a constant depending on the supremum norms of
$k\cdot\Theta$, $(k\cdot\Theta)'$, $|\Theta'|$ and $|\Theta''|$.
It is therefore expected that~$\hat{H}$
will be, in the limit as $a \to 0$, well approximated 
by the operator~$\hat{H}_0$ associated with the form   
$$
\begin{aligned}
  \hat{h}_0[\phi] 
  &\deff 
  \int_{\Pi} |\partial_1\phi|^2 \, \der s \, \der u
  + \frac{1}{a^2} 
  \int_{\Pi} |\partial_2\phi|^2 \, \der s \, \der u
  + \int_{\Pi} V_\mathrm{eff} \, |\phi|^2 \, \der s \, \der u
  \,,
  \\
  \Dom(\hat{h}_0) &\deff W_0^{1,2}(\Pi)
  \,,
\end{aligned}
$$
where we keep the same notation~$V_\mathrm{eff}$ 
for the function $V_\mathrm{eff} \otimes 1$ on $\R \times (-1,1)$.
Here we establish the closeness of 
the operators~$\hat{H}$ and~$\hat{H}_0$
in a norm resolvent sense. 
To formulate the result, we note that the bound 
\begin{equation}\label{z0}
  V_\mathrm{eff}(s) \geq 
  - \frac{1}{4} \, \|k\cdot\Theta\|_{L^\infty(\R)}^2 
  \deffin z_0
\end{equation}
and the Poincar\'e inequality~\eqref{Poincare}
imply that $\hat{H}_0 - E_1 \geq z_0$.
Hence, any $z < z_0$ certainly belongs to the resolvent set of~$\hat{H}_0$.

\begin{theorem}\label{Thm.thin}
Suppose Assumption~\ref{Ass.thin}.
For every $z < z_0$ there exist positive numbers~$a_0$ and~$C$ 
such that, for all $a \leq a_0$, $z \in \rho(\hat{H})$ and
\begin{equation}\label{nrs}
  \|(\hat{H}-E_1-z)^{-1} - (\hat{H}_0-E_1-z)^{-1}\|_{\sii(\Pi)\to\sii(\Pi)} 
  \leq C \, a
  \,.
\end{equation}
\end{theorem}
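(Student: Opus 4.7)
The plan is to use the quadratic form version of the resolvent difference identity. For $f,g\in L^2(\Pi)$, set $\psi\deff(\hat H-E_1-z)^{-1}f$ and $\phi\deff(\hat H_0-E_1-\bar z)^{-1}g$; the defining equalities for the two resolvents, tested against admissible vectors in the common form domain $W_0^{1,2}(\Pi)$, yield
$$
\bigl\langle g,[(\hat H-E_1-z)^{-1}-(\hat H_0-E_1-z)^{-1}]f\bigr\rangle
=(\hat h_0-\hat h)(\phi,\psi).
$$
Consequently, the estimate~\eqref{nrs} follows at rate $O(a)$ from two ingredients: (a) a form-difference bound
$$
|(\hat h-\hat h_0)(\phi,\psi)|\leq C\,a\,\|\phi\|_{W^{1,2}(\Pi)}\,\|\psi\|_{W^{1,2}(\Pi)},
$$
and (b) uniform resolvent bounds $\|\phi\|_{W^{1,2}(\Pi)}\leq C\|g\|_{L^2(\Pi)}$ and $\|\psi\|_{W^{1,2}(\Pi)}\leq C\|f\|_{L^2(\Pi)}$, valid for $a$ small and $z$ in a fixed compact subset of $(-\infty,z_0)$.

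For (a), a direct computation gives
$$
(\hat h-\hat h_0)[\phi]=\int_\Pi(f_a^{-2}-1)|\partial_1\phi|^2
+\tfrac14\int_\Pi\frac{(\partial_1 f_a)^2}{f_a^4}|\phi|^2
-\operatorname{Re}\int_\Pi\frac{\partial_1 f_a}{f_a^2}\,\bar\phi\,\partial_1\phi
+\int_\Pi(V_a-V_{\mathrm{eff}})\,|\phi|^2,
$$
and each term is $O(a)$ thanks to the uniform estimates~\eqref{uniform} and the lower bound $f_a\geq 1-a\|k\cdot\Theta\|_\infty>1/2$ from Assumption~\ref{Ass.thin}; polarisation extends this to the sesquilinear form. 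Here one critically needs the $O(a)$ rate in $\|V_a-V_{\mathrm{eff}}\|_{L^\infty(\Pi)}$, which is why Assumption~\ref{Ass.thin} requires $(k\cdot\Theta)',\,|\Theta''|\in L^\infty(\R)$: after carrying out the differentiations inside $V_a=-\tfrac1{4a^2}(\partial_2 f_a)^2/f_a^2+\tfrac1{2a^2}\partial_2^2 f_a/f_a$ and Taylor-expanding in the parameter $a$, all surviving terms reduce to these bounded quantities.

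For (b), observe that the transverse Poincar\'e inequality on $(-1,1)$ gives $a^{-2}\int_\Pi|\partial_2\phi|^2\geq E_1\|\phi\|_{L^2(\Pi)}^2$, so that for $\phi\in\Dom(\hat h)=W_0^{1,2}(\Pi)$,
$$
\hat h[\phi]-E_1\|\phi\|_{L^2(\Pi)}^2
\geq\hat h_1[\phi]+\int_\Pi V_a\,|\phi|^2
\geq(1-Ca)\|\partial_1\phi\|_{L^2(\Pi)}^2+(z_0-Ca)\|\phi\|_{L^2(\Pi)}^2,
$$
where the last step uses~\eqref{z0}, the lower bound on $f_a$, and completing the square in the cross term of $\hat h_1$. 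For $z<z_0$ and $a$ small this shows $z\in\rho(\hat H)$ and yields control of $\|\partial_1\psi\|$ and $\|\psi\|$ by $\|f\|$; the $\partial_2$-component is handled separately, since $a^{-2}\|\partial_2\psi\|^2$ appears in $\hat h[\psi]$ with an $a^{-2}$ prefactor, giving in particular $\|\partial_2\psi\|\leq Ca\|f\|$ and hence a fortiori $\|\psi\|_{W^{1,2}(\Pi)}\leq C\|f\|_{L^2(\Pi)}$. The same reasoning applies verbatim to $\hat H_0$.

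The main technical obstacle is the careful extraction of the $O(a)$ rate for $V_a-V_{\mathrm{eff}}$: $V_a$ contains the factors $a^{-2}$ which must be cancelled by $a^2$-factors coming from two transverse derivatives landing on $f_a(s,u)=f(s,au)$. Tracking this cancellation with only $C^{2,1}$ regularity of $\Gamma$ and $C^{1,1}$ regularity of $\Theta$ is precisely the content of the uniform bounds~\eqref{uniform}, and once these are in hand the remaining argument is a routine assembly of (a), (b) and the form resolvent identity.
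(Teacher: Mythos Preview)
Your proof is correct and follows essentially the same route as the paper: the form resolvent identity $(g,[(\hat H-E_1-z)^{-1}-(\hat H_0-E_1-z)^{-1}]f)=\hat h_0(\phi,\psi)-\hat h(\phi,\psi)$, combined with (a) the $O(a)$ bound on the form difference via~\eqref{uniform} and (b) uniform resolvent bounds coming from coercivity of $\hat h-E_1-z$. One harmless slip: the claim $\|\partial_2\psi\|\leq Ca\|f\|$ is not right, because $E_1\|\psi\|^2=(\pi/2a)^2\|\psi\|^2$ contributes a term of order $a^{-2}$ to $a^{-2}\|\partial_2\psi\|^2$; the correct bound is $\|\partial_2\psi\|\leq C\|f\|$, but since the form difference $\hat h-\hat h_0$ contains no $\partial_2$ terms this weaker bound (indeed, no $\partial_2$ bound at all) suffices, exactly as in the paper.
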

\begin{proof}
Let us write $\|\cdot\|$ and $(\cdot,\cdot)$ 
for the norm and inner product of~$\sii(\Pi)$, respectively.
Given any $F_0 \in \sii(\Pi)$, 
let $\phi_0 \in \Dom(\hat{H}_0)$ 
be the (unique) solution of the resolvent equation 
$(\hat{H}_0 - E_1 - z)\phi_0=F_0$.
Using the Schwarz inequality and~\eqref{Poincare}, 
one has the estimates
$$
  \|\partial_1\phi_0\|^2 + (z_0 - z) \|\phi_0\|^2
  \leq \hat{h}_0[\phi_0] - E_1 \|\phi_0\|^2 - z \|\phi_0\|^2
  = (\phi_0,F_0)
  \leq \|\phi_0\| \|F_0\|
  \,.
$$
Consequently, 
\begin{equation}\label{uniform1}
  \|\phi_0\| \leq C \, \|F_0\| 
  \qquad \mbox{and} \qquad
  \|\partial_1\phi_0\| \leq C \, \|F_0\| 
  \,,
\end{equation}
where~$C$ is a positive constant depending exclusively on~$z_0-z$. 
From now on, we denote by~$C$ a generic constant
(possibly depending on~$z$ and the supremum norms of
$k\cdot\Theta$, $(k\cdot\Theta)'$, $|\Theta'|$ and $|\Theta''|$),
which might change from line to line.
 
For every $\phi \in W_0^{1,2}(\Pi)$ and $\delta \in (0,1)$, 
we have
$$
\begin{aligned}
  \hat{h}_1[\phi] 
  &\geq
  \delta \int_{\Pi} \frac{|\partial_1\phi|^2}{f_a^2} \, \der s \, \der u
  - \frac{1}{4} \frac{\delta}{1-\delta} \int_{\Pi} 
  \frac{(\partial_1 f_a)^2}{f_a^4} \, 
  |\phi|^2 \, \der s \, \der u
  \\
  &\geq \delta \, C^{-1} \, \|\partial_1\phi\|^2 
  - \frac{1}{4} \frac{\delta}{1-\delta} \, C \, a^2 \, \|\phi\|^2  
  \,,
\end{aligned}
$$
where the second inequality is due to~\eqref{uniform}.
At the same time, using additionally~\eqref{Poincare}, we have
$$
\begin{aligned}
  \hat{h}_2[\phi] 
  &=
  \frac{1}{a^2} \, \int_{\Pi} |\partial_2\phi|^2 \, \der s \, \der u
  + \int_{\Pi} V_\mathrm{eff} \, |\phi|^2 \, \der s \, \der u
  + \int_{\Pi} (V_a-V_\mathrm{eff}) \, |\phi|^2 \, \der s \, \der u
  \\
  &\geq E_1 \, \|\phi\|^2  
  + z_0 \, \|\phi\|^2   
  - C \, a \, \|\phi\|^2
  \,.
\end{aligned}
$$
Consequently, $\hat{H}-E_1-z \geq z_0-z-Ca$
(with a possibly different constant~$C$).
Since $z_0-z$ is positive, it follows that there exists a positive number~$a_0$ such that,
for all $a \leq a_0$, the number~$z$ belongs to the resolvent set of~$\hat{H}$. 
Given any $F \in \sii(\Pi)$, 
let $\phi \in \Dom(\hat{H})$  
be the (unique) solution of the resolvent equation
$(\hat{H} - E_1 - z)\phi=F$.
The above estimates imply 
\begin{equation}\label{uniform2}
  \|\phi\| \leq C \, \|F\| 
  \qquad \mbox{and} \qquad
  \|\partial_1\phi\| \leq C \, \|F\| 
  \,,
\end{equation}

Now we write
$$
\begin{aligned}
  \big(F,[(\hat{H}-E_1-z)^{-1} - (\hat{H}_0-E_1-z)^{-1}]F_0\big) 
  &= \big(\phi,(\hat{H}_0-E_1-z)\phi_0\big) 
  - \big((\hat{H}-E_1-z)\phi,\phi_0\big) 
  \\
  &= \hat{h}_0(\phi,\phi_0) - \hat{h}(\phi,\phi_0)
  \,,
\end{aligned}
$$
where the last equality follows from the fact that
the operators~$\hat{H}$ and~$\hat{H}_0$ have the same form domains.
Using the structure of the forms~$\hat{h}$ and~$\hat{h}_0$, 
we estimate the difference of the sesqulinear forms as follows:
$$
\begin{aligned}
  |\hat{h}_0(\phi,\phi_0) - \hat{h}(\phi,\phi_0)|
  \leq \ & 
  \|1-f_a^{-2}\|_{L^\infty(\Pi)} \|\partial_1\phi\| \|\partial_1\phi_0\|
  + \frac{1}{4} \|f_a^{-4}(\partial_1 f_a)^2\|_{L^\infty(\Pi)}
  \|\phi\| \|\phi_0\|
  \\
  &
  + \frac{1}{2} \|f_a^{-2}(\partial_1 f_a)\|_{L^\infty(\Pi)}
  (\|\partial_1\phi\| \|\phi_0\| + \|\phi\| \|\partial_1\phi_0\|)
  \\
  &
  + \|V_a-V_\mathrm{eff}\|_{L^\infty(\Pi)} \|\phi\| \|\phi_0\|
  \,.
\end{aligned}
$$
Using~\eqref{uniform} and~\eqref{uniform1}--\eqref{uniform2},
it follows that 
$$
  \left|
  \big(F,[(\hat{H}-E_1-z)^{-1} - (\hat{H}_0-E_1-z)^{-1}]F_0\big) 
  \right|
  \leq C \, a \, \|F\| \|F_0\|
  \,.
$$
Dividing by $\|F\| \|F_0\|$ and taking the supremum over all $F,F_0 \in \sii(\Pi)$,
we arrive at~\eqref{nrs}.  
\end{proof}

As a particular consequence of Theorem~\ref{Thm.thin},
we get a certain convergence of the spectrum of~$\hat{H}$
to the spectrum of the one-dimensional operator~$H_\mathrm{eff}$. 
Indeed, by a separation of variables,
the spectrum of~$\hat{H}_0$ decouples as follows:
$$
  \sigma(\hat{H}_0-E_1) 
  = \bigcup_{j=1}^\infty [\sigma(H	_\mathrm{eff}) + E_j-E_1]
  \,,
$$
where $E_j \deff \left(\frac{j\pi}{2a}\right)^2$ 
are the eigenvalues of the Dirichlet Laplacian in $\sii((-a,a))$.
It follows that the spectrum of $\hat{H}_0-E_1$
converges to the spectrum of~$H_\mathrm{eff}$ 
in the sense that, given any positive number~$L$,
there is another positive number~$a_L$ such that,
for all $a \leq a_L$, one has
$$
  \sigma(\hat{H}_0-E_1) \cap (-\infty,L)
  = \sigma(H_\mathrm{eff}) \cap (-\infty,L)
  \,.
$$
Theorem~\ref{Thm.thin} particularly implies 
that for any discrete eigenvalue of~$H_\mathrm{eff}$,
there is a discrete eigenvalue of~$\hat{H}-E_1$ 
(and therefore of~$H-E_1$) which converges to the former as $a \to 0$.
A convergence in norm of corresponding spectral projections also follows.

What is more, the spectral convergence follows as a consequence
of a norm resolvent convergence again.
To see it, we define the orthogonal projection
$$
  (P \psi)(s,u) \deff 
  \hat{\chi}_1(u) \int_{-1}^{1} \psi(s,\eta) \, \hat{\chi}_1(\eta) \, \der \eta
  \,,
$$
where $\hat{\chi}_1(u) \deff \sqrt{a} \, \chi_1(au)$
with~$\chi_1$ being the first eigenfunction 
of the Dirichlet Laplacian in $\sii((-a,a))$, see~\eqref{chi1}.
The closed subspace $P\sii(\Pi)$ obviously consists of functions
of the form $(s,u) \mapsto \varphi(s)\chi_1(u)$, where $\varphi \in \sii(\R)$.
The mapping $\pi:\sii(\R) \to P\sii(\Pi)$ defined by
$(\pi\varphi)(s,u) \deff \varphi(s)\chi_1(u)$ is an isometric isomorphism.
In this way, we may canonically identify any operator~$T$ acting in $\sii(\R)$ 
with the operator $\pi T \pi^{-1}$ in $P\sii(\Pi) \subset \sii(\Pi)$.
In particular, we use the same symbol~$H_\mathrm{eff}$ 
for the corresponding operator in $P\sii(\Pi)$,
and similarly for its resolvent.  
With these preliminaries, the desired result reads as follows.

\begin{proposition}\label{Prop.thin}
Suppose Assumption~\ref{Ass.thin}.
For every $z < z_0$ one has
\begin{equation}\label{nrs0}
  \|
  (\hat{H}_0-E_1-z)^{-1} - (H_\mathrm{eff}-z)^{-1} \oplus 0
  \|_{\sii(\Pi)\to\sii(\Pi)} 
  \leq \frac{3}{2\pi} \, a
  \,,
\end{equation}
where~$0$ denotes the zero operator on $\sii(\Pi) \ominus P\sii(\Pi)$.
\end{proposition}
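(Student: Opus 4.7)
The strategy is to exploit the product structure of $\hat H_0$: its potential $V_\mathrm{eff}$ depends only on~$s$, and the transverse Dirichlet operator $-a^{-2}\partial_u^2$ on $(-1,1)$ admits the complete orthonormal eigenbasis $\{\hat\chi_j\}_{j\geq 1}$, with eigenvalues $\mu_j/a^2=(j\pi/(2a))^2$ where $\mu_j \deff (j\pi/2)^2$. Writing $P_j$ for the orthogonal projection onto $\sii(\R)\otimes\mathrm{span}\{\hat\chi_j\}$, one has $P=P_1$. The plan is to show that $\hat H_0$ decouples across the decomposition $\sii(\Pi) = P\sii(\Pi) \oplus (1-P)\sii(\Pi)$, that the first piece is unitarily equivalent via~$\pi$ to $H_\mathrm{eff}+E_1$, and that the second piece admits the large lower bound $3\pi^2/(4a^2)+z_0$ coming from the transverse spectral gap.

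To establish the decoupling, I would first check that $P$ preserves the form domain $W_0^{1,2}(\Pi)$, which is clear from $\partial_1(P\phi)=\hat\chi_1(u)\,\varphi'(s)$ (combined with Cauchy--Schwarz in the defining $u$-integral of $\varphi$) and from the Dirichlet condition $\hat\chi_1(\pm 1)=0$. I would then verify $\hat h_0[\phi]=\hat h_0[P\phi]+\hat h_0[(1-P)\phi]$ for every $\phi\in W_0^{1,2}(\Pi)$. The cross terms arising from the longitudinal derivative and the potential parts vanish immediately because they contain an inner factor $\int_{-1}^1\hat\chi_1(u)\,(1-P)\phi(s,u)\,\der u\equiv 0$; the cross term in the transverse derivative part requires one integration by parts, whose boundary contribution disappears from $\hat\chi_1(\pm 1)=(1-P)\phi(\cdot,\pm 1)=0$ and whose bulk remainder is eliminated via the eigenvalue relation $-\hat\chi_1''=(\pi/2)^2\hat\chi_1$ together with the same orthogonality.

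Once the decoupling holds, identifying $\hat H_0\big|_{P\sii(\Pi)}$ with $H_\mathrm{eff}+E_1$ on $\sii(\R)$ is a direct computation using $\|\hat\chi_1\|_{\sii((-1,1))}=1$ and $\int|\hat\chi_1'|^2\,\der u=(\pi/2)^2=a^2 E_1$. On $(1-P)\sii(\Pi)$, Fourier-expanding in $\{\hat\chi_j\}_{j\geq 2}$ and using $V_\mathrm{eff}\geq z_0$ together with $-\partial_s^2\geq 0$ yields the uniform bound
\[
  \hat H_0\big|_{(1-P)\sii(\Pi)} - E_1 \;\geq\; \frac{\mu_2-\mu_1}{a^2}+z_0 \;=\; \frac{3\pi^2}{4a^2}+z_0.
\]
For $z<z_0$ the resolvent difference $(\hat H_0-E_1-z)^{-1} - (H_\mathrm{eff}-z)^{-1}\oplus 0$ therefore vanishes on $P\sii(\Pi)$ and agrees with $(\hat H_0-E_1-z)^{-1}$ on the complement, whose operator norm is bounded by the reciprocal of the transverse gap,
\[
  \frac{1}{3\pi^2/(4a^2)+z_0-z} \;\leq\; \frac{4a^2}{3\pi^2}.
\]
The elementary inequality $4a^2/(3\pi^2)\leq 3a/(2\pi)$, valid for $a\leq 9\pi/8$ (automatic in the thin-strip regime), then delivers the announced estimate. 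The only mildly delicate step is the verification of the form decoupling; everything else is routine spectral theory on a direct sum.
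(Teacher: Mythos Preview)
Your argument is correct and is essentially the paper's own: both decompose $\sii(\Pi)=P\sii(\Pi)\oplus P^\bot\sii(\Pi)$, identify the $P$-block of $\hat H_0-E_1$ with $H_\mathrm{eff}$, and bound the resolvent on the complement by the reciprocal of the transverse gap $E_2-E_1=3\pi^2/(4a^2)$. Your explicit form-decoupling verification replaces the paper's device of testing the weak resolvent equation against $\phi=P^\bot\psi$, and your constant $4a^2/(3\pi^2)$ is precisely what the paper's computation yields (the paper records $a^2/(3\pi^2)$, an apparent arithmetic slip).
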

\begin{proof}
Defining $P^\bot \deff I-P$, we have the identity
$$
\begin{aligned}
  (\hat{H}_0-E_1-z)^{-1}
  = \ &P (\hat{H}_0-E_1-z)^{-1} P + P^\bot (\hat{H}_0-E_1-z)^{-1} P^\bot
  \\
  & + P (\hat{H}_0-E_1-z)^{-1} P^\bot
  + P^\bot (\hat{H}_0-E_1-z)^{-1} P
  \\
  = \ &
  (H_\mathrm{eff}-z)^{-1} P 
  + P^\bot (\hat{H}_0-E_1-z)^{-1} P^\bot
  \,.
\end{aligned}
$$
Given any $F \in \sii(\Pi)$,
let~$\psi$ be the (unique) solution of the resolvent equation
$(\hat{H}_0-E_1-z)\psi=P^\bot F$. 
That is, $\psi \in \Dom(\hat{H}_0) \subset \Dom(\hat{h}_0)$
and, for every $\phi \in \Dom(\hat{h}_0)$,
$$
  \hat{h}_0(\phi,\psi) - (E_1+z) (\phi,\psi)
  = (\phi,P^\bot F) \leq \|\phi\| \|P^\bot F\|
  \,. 
$$ 
Choosing $\phi \deff P^\bot\psi$ and using~\eqref{z0}   
together with the facts that 
$
  (\partial_1 P^\bot\psi,\partial_1\psi)
  = \|\partial_1 P^\bot\psi\|^2 \geq 0
$ 
and 
$
  (\partial_2 P^\bot\psi,\partial_2\psi)
  = \|\partial_2 P^\bot\psi\|^2 
  \geq E_2 \|P^\bot\psi\|^2 
$,
we therefore get 
$$
  \|P^\bot\psi\| \leq \frac{1}{E_2-E_1} \, \|P^\bot F\|
  = \frac{a^2}{3\pi^2} \, \|P^\bot F\|
  \,.
$$
Consequently,
$$
  \big|\big(F,P^\bot (\hat{H}_0-E_1-z)^{-1} P^\bot F\big)\big|
  \leq \frac{a^2}{3\pi^2} \, \|P^\bot F\|^2
  \leq \frac{a^2}{3\pi^2} \, \|F\|^2
  \,.
$$
In view of the resolvent identity above,
this proves the desired claim.
\end{proof}

Combining Theorem~\ref{Thm.thin} and Proposition~\ref{Prop.thin}
and recalling the unitary equivalence of~$H$ and~$\hat{H}$,
we have just justified the formal statement~\eqref{formal} 
in a rigorous way of a norm resolvent convergence.

\begin{corollary}\label{Corol.thin} 
Suppose Assumption~\ref{Ass.thin}.
For every $z < z_0$ there exist positive numbers~$a_0$ and~$C$ 
such that, for all $a \leq a_0$, $z \in \rho(\hat{H})$ and
\begin{equation*}
  \|(\hat{H}-E_1-z)^{-1} 
  - (H_\mathrm{eff}-z)^{-1} \oplus 0\|_{\sii(\Pi)\to\sii(\Pi)} 
  \leq C \, a
  \,.
\end{equation*}
\end{corollary}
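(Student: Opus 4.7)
The plan is to derive Corollary~\ref{Corol.thin} as an immediate triangle-inequality consequence of Theorem~\ref{Thm.thin} and Proposition~\ref{Prop.thin}, since the former controls $(\hat{H}-E_1-z)^{-1} - (\hat{H}_0-E_1-z)^{-1}$ while the latter controls $(\hat{H}_0-E_1-z)^{-1} - (H_\mathrm{eff}-z)^{-1}\oplus 0$. Both bounds are $O(a)$ in the operator norm on $\sii(\Pi)$, so a single application of the triangle inequality yields the desired $O(a)$ bound for the difference of the operators appearing in the corollary.

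More concretely, I would first invoke Theorem~\ref{Thm.thin}: for $z < z_0$ there exist $a_0 > 0$ and $C_1 > 0$ such that, for all $a \leq a_0$, one has $z \in \rho(\hat{H})$ together with the estimate
\begin{equation*}
  \|(\hat{H}-E_1-z)^{-1} - (\hat{H}_0-E_1-z)^{-1}\|_{\sii(\Pi)\to\sii(\Pi)} \leq C_1 \, a.
\end{equation*}
Next I would apply Proposition~\ref{Prop.thin}, which requires no smallness of $a$ and gives
\begin{equation*}
  \|(\hat{H}_0-E_1-z)^{-1} - (H_\mathrm{eff}-z)^{-1}\oplus 0\|_{\sii(\Pi)\to\sii(\Pi)} \leq \tfrac{3}{2\pi}\, a.
\end{equation*}
Adding the two bounds via the triangle inequality and setting $C \deff C_1 + \tfrac{3}{2\pi}$ concludes the estimate for every $a \leq a_0$.

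There is essentially no obstacle here beyond bookkeeping: the two input results are already stated in exactly the form required, and no additional regularity or spectral argument is needed. The only point worth verifying is that the range of admissible~$z$ and the threshold~$a_0$ for invertibility of $\hat{H}-E_1-z$ come entirely from Theorem~\ref{Thm.thin}, so that the sum bound is uniform on the same parameter range. Finally, I would remark that since $H$ and $\hat{H}$ are unitarily equivalent via~$U$, the corollary transfers verbatim to the original operator~$H$, making precise the formal norm-resolvent limit~\eqref{formal}.
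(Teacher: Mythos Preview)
Your proposal is correct and matches the paper's own argument exactly: the corollary is stated without a separate proof, the paper simply noting that it follows by ``combining Theorem~\ref{Thm.thin} and Proposition~\ref{Prop.thin}'' via the triangle inequality. Your bookkeeping of the constants and the remark on the unitary equivalence with~$H$ are precisely what the paper intends.
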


We note that this result has been previously established 
by Verri~\cite{Verri_2019} in the special setting of purely twisted strips.
In fact, in recent years there has been an exponential growth of interest 
in effective models for thin waveguides 
under various geometric and analytic deformations,
see \cite{Ferreira-Mascarenhas-Piatnitski_2015,
Jimbo-Kurata_2016,
Keller-Teufel_2016,Lampart-Teufel_2017,Mehats-Raymond_2017,
Oliveira-Verri_2017a,Oliveira-Verri_2017b,
Mamani-Verri_2018a,Mamani-Verri_2018b,Yachimura_2018,
deOliveira-Hartmann-Verri_2019,
Verri_2019,Oliveira-Rossini} 
and further references therein.
We refer to~\cite{KRRS2} for a unifying approach to this type of problems.

\begin{open} 
Following~\cite{Mamani-Verri_2018b},
locate the band gaps in thin periodically twisted and bent strips.
\end{open}

\appendix

\section{Relatively parallel frame}\label{Sec.app}
%
In this appendix we establish a purely geometric fact
about the existence of a \emph{relatively parallel adapted frame} 
for any curve $\Gamma:I\to\R^{n+1}$,
where $n \geq 1$ and $I \subset \R$ is an arbitrary open interval
(bounded or unbounded).  
Our primary motivation is to generalise 
the approach of Bishop~\cite{Bishop_1975} for $n+1=3$  
to any space dimension.
Secondarily,
and contrary to Bishop who assumes that the curve~$\Gamma$ is of class~$C^2$,
we proceed under the minimal hypothesis 
\begin{equation}\label{minimal}
  \Gamma \in C^{1,1}(I;\R^{n+1}) 
  \,,
\end{equation}
which is natural for applications 
(like, for instance, in the theory of quantum waveguides
considered in this paper).
For three-dimensional curves, the latter generalisation
has been already performed in~\cite{KSed}.

Without loss of generality, we assume that~$\Gamma$
is parameterised by its arc-length, 
\ie~$|\Gamma'(s)|=1$ for all $s \in \R$.
Then $T\deff\Gamma'$ defines a unit tangent vector field along~$\Gamma$,
which is locally Lipschitz continuous 
and as such it is differentiable almost everywhere in~$I$.
The non-negative number $\kappa \deff |\Gamma''|$ 
is called the \emph{curvature} of~$\Gamma$.
It is worth noticing that the curvature~$\kappa$ is not assumed 
to be (globally) bounded by~\eqref{minimal}. 
In particular, $\Gamma$~is allowed to be 
a spiral with $\kappa(s) \to \infty$ as $s \to \pm\infty$.

An \emph{adapted frame} of~$\Gamma$ 
is the $(n+1)$-tuple $(T,N_1,\dots,N_n)$ 
of orthonormal vector fields along~$\Gamma$,
which are differentiable almost everywhere in~$I$.
We say that a normal vector field~$N$ along~$\Gamma$
is \emph{relatively parallel} if~$N$  
is differentiable almost everywhere in~$I$
and the derivative~$N'$ is tangential 
(\ie~there exists a locally bounded function $k:I\to \R$ such that $N'=k T$).
Notice that any relatively parallel vector field~$N$ along~$\Gamma$
has a constant length (indeed, ${N^2}'=2N \cdot N' = 0$).
By a \emph{relatively parallel adapted frame} of~$\Gamma$
we then mean an adapted frame $(T,N_1,\dots,N_n)$
such that the normal vector fields 
$N_1,\dots,N_n$ are relatively parallel.  
Consequently, the relatively parallel adapted frame satisfies the equation
\begin{equation}\label{frame.bis}
\begin{pmatrix}
    T \\
    N_1 \\
    \vdots \\
    N_n
\end{pmatrix}'
=
\begin{pmatrix}
    0 & k_1 & \dots  & k_n \\
   -k_1 &  0 & \dots  & 0 \\
    \vdots & \vdots & \ddots & \vdots \\
    -k_n  & 0 & \dots  & 0
\end{pmatrix}
\begin{pmatrix}
T \\
    N_1 \\
    \vdots \\
    N_n
\end{pmatrix}
,
\end{equation}
where $k_1,\dots,k_n \in L_\mathrm{loc}^\infty(I)$. 
Necessarily, $k_1^2+\dots+k_n^2=\kappa^2$.

\begin{example}[Frenet frame]\label{Ex.Frame}
If $\Gamma \in C^{1,1}(I;\R^2)$, 
then the Frenet frame $(T,N)$ with $N\deff(-\Gamma_2',\Gamma_1')$
is a relatively parallel adapted frame of~$\Gamma$.
Indeed, one has the Frenet-Serret formulae
$$
  \begin{pmatrix}
    T \\ N
  \end{pmatrix}'
  =
  \begin{pmatrix}
    0 & k \\
    -k & 0
  \end{pmatrix}
  \begin{pmatrix}
    T \\ N
  \end{pmatrix}
  ,
$$
where the \emph{signed curvature} 
$k \deff -\Gamma_1''\Gamma_2'+\Gamma_1'\Gamma_2''$ satisfies $|k|=\kappa$.

Let $\Gamma \in C^{2,1}(I;\R^3)$ and assume that $\kappa>0$,
so that the \emph{principal normal} $M_1\deff\Gamma''/|\Gamma''|$ 
is well defined.	
Defining the \emph{binormal} $M_2 \deff T \times M_1$,
it is customary to consider the Frenet frame $(T,M_1,M_2)$.
The Frenet-Serret equations read
$$
  \begin{pmatrix}
    T \\ M_1 \\ M_2
  \end{pmatrix}'
  =
  \begin{pmatrix}
    0 & \kappa & 0 \\
    -\kappa & 0 & \tau \\
   0 & -\tau & 0 
  \end{pmatrix}
  \begin{pmatrix}
    T \\ M_1 \\ M_2
  \end{pmatrix}
  ,
$$
where $\tau\deff\det(\Gamma',\Gamma'',\Gamma''')/\kappa^2$ 
is the \emph{torsion} of~$\Gamma$.
Consequently, the Frenet frame is a relatively parallel adapted frame
if, and only, if $\tau=0$, \ie, $\Gamma$~lies in a plane.  

In general, let $\Gamma \in C^{n,1}(I;\R^{n+1})$ with $n \geq 1$
and assume that the vector fields
$\Gamma',\Gamma'',\dots,\Gamma^{(n)}$ are linearly independent.
By applying the Gram-Schmidt orthogonalisation process 
to $\Gamma',\Gamma'',\dots,\Gamma^{(n)}$,
it is easily seen (see, \eg, \cite[Prop.~1.2.2]{Kli})
that there exists a Frenet frame $(T,M_1,\dots,M_n)$ 
satisfying the equations
$$
  \begin{pmatrix}
    T \\ M_1 \\ \vdots \\ M_n
  \end{pmatrix}'
  =
  \begin{pmatrix}
   0                & \kappa_1 &               & \textrm{\Large 0}\\
   -\kappa_1        & \ddots   & \ddots        &                 \\
                    & \ddots   & \ddots        & \kappa_{n}    \\
   \textrm{\Large 0} &          & -\kappa_{n} & 0
   \end{pmatrix}
  \begin{pmatrix}
    T \\ M_1 \\ \vdots \\ M_n
  \end{pmatrix}
$$
with some locally bounded functions $\kappa_1,\dots,\kappa_n$ 
actually defined by these formulae.
Again, the Frenet frame is a relatively parallel adapted frame
if, and only, if all the higher curvatures $\kappa_2,\dots,\kappa_n$ equal to zero, 
\ie, $\Gamma$~lies in a plane. 
We refer to~\cite{ChDFK} for a construction of the Frenet frame 
under weaker hypotheses about~$\Gamma$. 
\end{example} 

The defect of working with the Frenet frame is that it requires
at least the regularity $\Gamma \in C^{n,1}(I;\R^{n+1})$.
Moreover, the non-degeneracy condition that the vector fields
$\Gamma',\Gamma'',\dots,\Gamma^{(n)}$ are linearly independent
is indeed necessary in general (\cf~\cite[Chapt.~1]{Spivak1}).
Fortunately, its alternative given by the relatively parallel adapted frame
always exists, and moreover the minimal hypothesis~\eqref{minimal} is enough.
\begin{theorem}\label{Thm.app}
Suppose~\eqref{minimal}.
Let $(T(s_0),N_1^0,\dots,N_n^0)$ be an orthonormal basis 
of the tangent space $\mathbb{T}_{\Gamma(s_0)}\R^{n+1}$ for some $s_0 \in I$.
Then there exists a unique relatively parallel adapted frame 
$(T,N_1,\dots,N_n)$ of~$\Gamma$ such that 
$N_j(s_0) = N_j^0$ for every $j \in \{1,\dots,n\}$.
\end{theorem}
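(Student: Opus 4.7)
The plan is to reduce the construction to a system of linear ODEs that can be handled by Carathéodory theory. Observe first that any candidate relatively parallel normal vector field $N$ along $\Gamma$ must satisfy a specific ODE: differentiating the normality constraint $N \cdot T = 0$ gives $N' \cdot T + N \cdot T' = 0$ almost everywhere, while the relative parallelism condition $N' = k T$ (for some scalar function $k$) forces $k = -N \cdot T'$. Hence $N$ must solve
\begin{equation*}
  N'(s) = -\bigl(N(s) \cdot T'(s)\bigr)\, T(s)
  \qquad\text{for a.e.\ } s \in I,
\end{equation*}
and conversely any absolutely continuous solution of this equation is a candidate for a relatively parallel vector field. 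I would set up this equation for each $j \in \{1,\dots,n\}$ independently with initial condition $N_j(s_0) = N_j^0$.

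The next step is to invoke the Carathéodory existence and uniqueness theorem. Since $T \in C^{0,1}_{\mathrm{loc}}(I;\R^{n+1})$ by~\eqref{minimal}, one has $T \in L^\infty_{\mathrm{loc}}$ and $T' \in L^\infty_{\mathrm{loc}}$, so the matrix-valued coefficient $A(s)\,v \deff -(v \cdot T'(s))\,T(s)$ has entries in $L^\infty_{\mathrm{loc}}(I)$. Because the ODE $N_j' = A(s) N_j$ is linear, Carathéodory theory supplies a unique absolutely continuous solution $N_j$ defined on the whole of $I$, with no risk of finite-time blow-up.

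It then remains to verify that the resulting $(n+1)$-tuple $(T,N_1,\dots,N_n)$ is an orthonormal adapted frame and that~\eqref{frame.bis} holds with some locally bounded $k_j$. The key identities come from differentiating scalar products along $\Gamma$:
\begin{align*}
  (N_j \cdot T)'
  &= N_j' \cdot T + N_j \cdot T'
  = -(N_j \cdot T')\,|T|^2 + N_j \cdot T' = 0,\\
  (N_i \cdot N_j)'
  &= N_i' \cdot N_j + N_i \cdot N_j'
  = -(N_i \cdot T')(T \cdot N_j) - (N_j \cdot T')(N_i \cdot T) = 0,
\end{align*}
where in the second line I use the first identity together with $|T|=1$. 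Since $(T(s_0),N_1^0,\dots,N_n^0)$ is orthonormal by hypothesis, these conservation laws propagate the orthonormality to every $s \in I$, so $(T,N_1,\dots,N_n)$ is an orthonormal basis of $\mathbb{R}^{n+1}$ at each point. Setting $k_j \deff N_j \cdot T'$, which lies in $L^\infty_{\mathrm{loc}}(I)$ because $|N_j|=1$ and $T' \in L^\infty_{\mathrm{loc}}$, the ODE rewrites as $N_j' = -k_j T$, and expanding $T'$ in the orthonormal frame while using $T \cdot T' = \tfrac12 (|T|^2)' = 0$ yields $T' = \sum_j (T' \cdot N_j) N_j = \sum_j k_j N_j$; altogether this is precisely~\eqref{frame.bis}. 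Uniqueness of the frame follows from the Carathéodory uniqueness applied to each $N_j$.

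The only delicate point I anticipate is bookkeeping at the borderline regularity~\eqref{minimal}: the derivative $T'$ is merely $L^\infty_{\mathrm{loc}}$, so one cannot appeal to the classical Picard--Lindelöf theorem, and all ``differentiations'' above must be interpreted in the almost-everywhere sense valid for absolutely continuous functions. Once this framework is fixed, the argument is essentially Bishop's original construction transposed from the three-dimensional smooth setting to arbitrary dimension and minimal regularity.
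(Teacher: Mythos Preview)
Your proposal is correct and takes a genuinely different route from the paper. The paper proceeds in three stages: it first constructs an arbitrary local adapted frame $(T,M_1,\dots,M_n)$ by an explicit coordinate ansatz followed by Gram--Schmidt, then rotates it by an orthogonal matrix $\tilde R$ solving the matrix ODE $\tilde R'+\tilde R\tilde A=0$ to achieve relative parallelism locally, and finally patches these local frames together over a precompact exhaustion of~$I$, invoking the uniqueness part to ensure smooth gluing. You bypass all of this by writing down directly the linear ODE $N_j'=-(N_j\cdot T')\,T$ that each relatively parallel normal field must satisfy, solving it globally in one shot via Carath\'eodory theory, and then recovering orthonormality from the conservation laws $(N_j\cdot T)'=0$ and $(N_i\cdot N_j)'=0$. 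Your argument is shorter and avoids both the auxiliary frame and the local-to-global patching; the paper's approach, on the other hand, makes explicit how any relatively parallel frame arises from an arbitrary adapted frame by a rotation solving a concrete ODE, which is what is exploited in the remark following the theorem to relate the construction to the Frenet frame via the torsion.
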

\begin{proof}
We divide the proof into several steps.

\smallskip
\noindent
\emph{Uniqueness.}
Assume that there exists another relatively parallel adapted frame
$(T,M_1,\dots,M_n)$ of~$\Gamma$ such that 
$N_j(s_0)=M_j(s_0)$ for every $j \in \{1,\dots,n\}$.
Then $(T,N_1-M_1,\dots,N_n-M_n)$ is also 
a relatively parallel adapted frame of~$\Gamma$.
The uniqueness follows by the general fact
that the length of any relatively parallel vector field is preserved
and by the hypothesis that the length of the difference $N_i-M_i$ at~$s_0$ is zero.

\smallskip
\noindent
\emph{Local existence of an adapted frame.}
Let $s_0$ be an arbitrary point of~$I$
and let us set $d_0 \deff \dist(s_0,\partial I)$.
From the identity $1=|T|^2=T_1^2 + \dots + T_{n+1}^2$ on $I$,
it follows that there exists at least one index 
$j \in \{1, \dots,n+1\}$ such that
\begin{equation*}
  T_j(s_0)^2  \geq \frac{1}{n+1} 
  \,.
\end{equation*}
Without loss of generality, we can assume that $j = n+1$. 
Since~$T$ is continuous, 
there must exist some $\eps \in (0,d_0]$ such that $|T| > 0$ 
on $(s_0 - \varepsilon, s_0 + \varepsilon) \subset I$. 
More specifically, using the identity
$$
  T(s) - T(s_0) = \int_{s_0}^s T'(\xi) \, \der\xi
$$
together with the fact that $\|T'\|=\kappa$
and that the curvature is locally bounded,
we may take
\begin{equation}\label{epsilon}
  \eps \deff \min\left\{
  \frac{d_0}{2},
  \frac{1}{\sqrt{n+1}\, \|\kappa\|_{L^\infty((s_0-d_0/2,s_0+d_0/2))}}
  \right\}
\end{equation}
(with the convention that the minimum equals $d_0/2$
if the supremum norm of the curvature equals zero).
Defining
\begin{align*}
  \tilde{M}_1 &\deff 
  \frac{1}{\sqrt{T_{n+1}^2 + T_1^2}} \,
  (T_{n+1}, 0, 0, \dots, 0, 0, - T_1) 
  \,,
  \\
  \tilde{M}_2 &\deff 
  \frac{1}{\sqrt{T_{n+1}^2 + T_2^ 2}} \,
  (0, T_{n+1}, 0, \dots, 0, 0, -T_2) 
  \,,
  \\
  \vdots 
  \\
  \tilde{M}_n & \deff 
  \frac{1}{\sqrt{T_{n+1}^2 + T_n^2}} \,
  (0, 0, 0,  \dots, 0, T_{n+1}, -T_n)
  \,,
\end{align*}
it is clear that, on $(s_0 - \varepsilon, s_0 + \varepsilon)$,
these vector fields are linearly independent,
orthogonal to the tangent vector~$T$ and of unit length. 
However, they do not need to be mutually orthogonal.
The desired adapted frame $(T, M_1, \dots, M_n)$ 
on $(s_0 - \varepsilon, s_0 + \varepsilon)$
is obtained by applying the Gram-Schmidt orthogonalisation procedure to 
$(T, \tilde{M}_1, \dots, \tilde{M}_n)$. 

\smallskip
\noindent
\emph{Local existence of the relatively parallel adapted frame.}
Let~$s_0$ be the given point of~$I$ from the statement of the theorem. 
By the preceding construction, we have an adapted frame $(T, M_1, \dots, M_n)$ 
of $\Gamma \upharpoonright (s_0 - \varepsilon, s_0 + \varepsilon)$.
It satisfies the equation
\begin{equation*} 
\begin{pmatrix}
    T \\
    M_1 \\
    \vdots \\
    M_n
\end{pmatrix}'
=
\begin{pmatrix}
    a_{00} & a_{01} & \dots  & a_{0n} \\
    a_{10} &  a_{11} & \dots  & a_{1n} \\
    \vdots & \vdots & \ddots & \vdots \\
    a_{n0} & a_{n1} & \dots  & a_{nn}
\end{pmatrix}
\begin{pmatrix}
    T \\
    M_1 \\
    \vdots \\
    M_n
\end{pmatrix}
,
\end{equation*}
where $a_{\mu\nu} \in L^\infty((s_0-\eps,s_0+\eps))$ 
are such that $a_{\mu\nu} = -a_{\nu\mu}$ with $\mu,\nu \in \{0,\dots,n\}$.
Hence the matrix-valued function 
$A \deff (a_{\mu\nu})_{\mu,\nu=0}^n$ is skew-symmetric. 
It will be convenient to express it as follows:
$$
  A = 
  \begin{pmatrix}
    0 & a^T \\
    -a & \tilde{A}
  \end{pmatrix}
  \,,
$$
where $\tilde{A} \deff (a_{jk})_{j,k=1}^n$ and $a^T \deff (a_{01},\dots,a_{0n})$.
Let us consider a generic orthogonal matrix-valued function 
$\tilde{R} \deff (r_{jk})_{j,k=1}^n$ 
satisfying $r_{jk} \in C^{0,1}([s_0-\eps,s_0+\eps])$ and define
$$
  R \deff 
\begin{pmatrix}
    1 & 0 \\
    0 & \tilde{R} 
\end{pmatrix}.
$$
We introduce a generic $n$-tuple $(N_1,\dots,N_n)$
of Lipschitz continuous vector fields along 
$\Gamma \upharpoonright (s_0-\eps,s_0+\eps)$
by setting
\begin{equation}\label{rotation}
  \begin{pmatrix}
    N_1 \\
    \vdots \\
    N_n
  \end{pmatrix}
  \deff
  \tilde{R}
  \begin{pmatrix}
    M_1 \\
    \vdots \\
    M_n
  \end{pmatrix}.
\end{equation}
Because of the orthogonality relation $\tilde{R}^{-1} = \tilde{R}^T$,
the vector fields $N_1,\dots,N_n$ are orthonormal.
Their derivatives satisfy the equations
$$
  \begin{pmatrix}
    N_1 \\
    \vdots \\
    N_n
  \end{pmatrix}'
  = B
  \begin{pmatrix}
    N_1 \\
    \vdots \\
    N_n
  \end{pmatrix}
$$
with 
$$
  B \deff R' R^{-1} + R A R^{-1} =
  \begin{pmatrix}
    0 & a^T \tilde{R}^T \\
    -\tilde{R} a & (\tilde{R}'+\tilde{R}\tilde{A}) \tilde{R}^T
  \end{pmatrix} .
$$
Comparing this matrix with the matrix appearing in~\eqref{frame.bis},
it follows that $(T,N_1,\dots,N_n)$ will be the desired relatively parallel 
adapted frame of $\Gamma \upharpoonright (s_0-\eps,s_0+\eps)$ 
(with $a^T \tilde{R}^T = (k_1,\dots,k_n)$)
provided that $\tilde{R}$ is a solution of the initial value problem
\begin{equation}\label{initial}
\left\{
\begin{aligned}
  \tilde{R}'+\tilde{R}\tilde{A} &= 0 
  & \mbox{on}& \quad (s_0-\eps,s_0+\eps) \,, \\
  \tilde{R} &= \tilde{R}_0
  & \mbox{at}& \quad s_0 \,,
\end{aligned}
\right.
\end{equation}
where $\tilde{R}_0$ is an orthogonal matrix such that 
$$
  \begin{pmatrix}
    N_1^0 \\
    \vdots \\
    N_n^0
  \end{pmatrix}
  =
  \tilde{R}_0
  \begin{pmatrix}
    M_1(s_0) \\
    \vdots \\
    M_n(s_0)
  \end{pmatrix}.
$$
By standard results (see, \eg, \cite[Thm.~1.2.1]{Zettl}),
it follows that~\eqref{initial} has a unique 
absolutely continuous solution~$\tilde{R}$.
From the differential equation in~\eqref{initial},
we deduce that~$\tilde{R}$ is actually Lipschitz continuous 
under our hypotheses and that it is orthogonal.

\smallskip
\noindent
\emph{Global existence of the relatively parallel adapted frame.}
Let $J$ be any open precompact subinterval of~$I$ containing the point~$s_0$.
Since the curvature is bounded in~$J$, the interval can be covered 
by a finite number of open intervals of equal length (\cf~\eqref{epsilon}),
for each of which there exists a family of relatively parallel
adapted frames by the local construction above. 
To get the global relatively parallel adapted frame on~$J$
satisfying the desired initial condition at~$s_0$,
we can patch together the local ones by employing 
the local frame already constructed on $(s_0-\eps,s_0+\eps)$
and the freedom of choosing the initial condition
in the problem analogous to~\eqref{initial}
for the covering subintervals.
Smoothness at the point where they link together
is a consequence of the uniqueness part.
Since there is the desired relatively parallel adapted frame
on \emph{any} open precompact subinterval~$J$ of~$I$, the result follows.
\end{proof}
\begin{remark}
Let $\Gamma:I\to\R^3$ be a space curve  
for which the Frenet frame $(T,M_1,M_2)$ exists,
see Example~\ref{Ex.Frame}.
Let $(T,N_1,N_2)$ denote a relatively parallel adapted frame of~$\Gamma$.
Let us parameterise the rotation matrix~$\tilde{R}$ from~\eqref{rotation}
as follows: 
$$
  \begin{pmatrix}
    N_1 \\ N_2
  \end{pmatrix}
  =
  \begin{pmatrix}
    \cos\vartheta & -\sin\vartheta \\
    \sin\vartheta & \cos\vartheta
  \end{pmatrix}
  \begin{pmatrix}
    M_1 \\ M_2
  \end{pmatrix}
  ,
$$ 
where $\vartheta:I\to\R$ is a differentiable function.
It follows from~\eqref{initial} that $\vartheta'=\tau$.
That is, the normal vectors of any relatively parallel adapted frame of~$\Gamma$
are rotated with respect to the Frenet frame
with the angle being a primitive of the torsion.
\end{remark}
%

\subsection*{Acknowledgement}
%
The research of D.K.\ was partially supported by the GACR grant No.\ 18-08835S.

%
\bibliographystyle{amsplain}
\bibliography{bib}
\end{document}